\newcounter{parentnumber}
\theoremstyle{plain}
\newtheorem {theorem}{Theorem}
\newtheorem{proposition}{Proposition}
\newtheorem {conjecture}{Conjecture}
\newtheorem {lemma}{Lemma}
\theoremstyle{definition}
\newtheorem {example}{Example}
\newcommand{\N}{\mathbb{N}}
\newcommand{\R}{\mathbb{R}}
\title{\vspace{-4em}A Group Public Goods Game with Position Uncertainty\thanks{We wish to thank conference participants at SING17, where this work has been presented, and particularly, Trivikram Dokka, Konstantinos Georgalos, Alexander Matros and Jaideep Roy for their helpful comments.}\\}
\author{ \normalsize Chowdhury Mohammad Sakib Anwar\thanks{BLDT, University of Winchester. Email: Sakib.Anwar@winchester.ac.uk}
\and \normalsize  Jorge Bruno\thanks{BLDT, University of Winchester. Email: Jorge.Bruno@winchester.ac.uk}
\and \normalsize  Sonali SenGupta\thanks{Queens Management School, Queen's University Belfast. Email: S.SenGupta@qub.ac.uk}}
\date{\normalsize \today}
\begin{document} 
\maketitle
\begin{abstract}
    We model a dynamic public good contribution game, where players are (naturally) formed into groups. The groups are exogenously placed in a sequence, with limited information available to players about their groups' position in the sequence. Contribution decisions are made by players simultaneously and independently, and the groups' total contribution is made sequentially. We try to capture both inter and intra-group behaviors and analyze different situations where players observe partial history about total contributions of their predecessor groups. Given this framework, we show that even when players observe a history of defection (no contribution), a cooperative outcome is achievable. This is particularly interesting in the situation when players observe only their immediate predecessor groups' contribution, where we observe that players play an important role in motivating others to contribute. 
    \medskip
\begin{flushleft}\textbf{Keywords} : Social Dilemmas, Public Goods, Position Uncertainty, Voluntary Contributions, Fundraising, Groups.\end{flushleft}\par
\begin{flushleft}\textbf{JEL Codes}: C72, D82, H41 \ \end{flushleft}\par
\end{abstract}

\newpage
\section{Introduction}

In public good provisions, the non-cooperative interplay of the players typically results in lower levels of provisions, compared to the socially optimal levels. Free-riding is a major stumbling block for efficient public good provision. Extensions of the standard one-shot public good provision game in the form of repeated strategic interactions over an infinite time horizon, or non-standard preferences, etc.,  may help converge to the socially optimal levels by improving contribution by self-interested agents.\footnote{See \textcite{friedman1971non, dal2010institutions, duffy2009cooperative}, among others, for theoretical and experimental work on repeated interactions, and \textcite{andreoni1990impure, fehr2000Gachter, fehr2002altruistic} on non-standard preferences like altruism, warm-glow effects, etc.}

In this paper we model a dynamic public goods game where a finite number of players are divided into groups, and each group is exogenously placed in a sequence, with limited information available to the players about their group's position in the sequence. Members of each group first decide, simultaneously and independently, whether or not to contribute towards the public good and the groups' total contribution is made sequentially. We assume that players observe some of their immediate predecessor groups' total contribution. Given this framework, we show that a cooperative outcome (where self-interested players contribute) is achievable even with a history of defection.

To help fix ideas and motivate our setting, consider a crowd-funding project which relies on raising money from a large number of individuals who make (small) contributions typically via an online platform. Typical of such a situation is that contributions are updated and visible with a time lag leading to a temporal aspect to contributor behavior, that is, between updates all potential contributors act on whether and how much to contribute simultaneously and independently. Naturally, all contributors that act within the same time interval can be viewed as a group in our setting. For confidentiality reasons, a platform may wish to not reveal individual contributions at any time to potential future contributors but would reveal immediate predecessor group or groups' contribution as a means to motivate more contributions.

Our model generalizes the model of \textcite{gallice2019co} public goods game with sequential uncertainty and observational learning, by capturing natural group formation (as illustrated via crowd-funding scenario) and capturing both inter and intra-group behaviors into account. We first consider symmetric (same size) groups, and then generalize our analysis to discuss asymmetric groups. Note that the symmetric groups appears as a special case of asymmetric groups, and all the analysis and results follow. Also note that in our model, individuals in a group only observe partial history of their immediate groups' total contributions, and do not observe individual players' contributions.\footnote{See also \textcite{monzon2019observational} for a related study where players are ordered in a sequence and they receive private signals about an uncertain state of the world and observe some history of immediate predecessors previous actions. Another interesting work of positional uncertainty is due to \textcite{monzon2014observational}.}

We characterize the  sequential equilibrium  for the following cases: (i) when players observe at least two of their predecessor  groups' total contribution, and (ii) when players observe only their immediate predecessor group's total contribution.  We find full contribution can occur in the first case  if the rate of return from the public goods is high enough. In the second case we show contributions can occur in two concurrent types of equilibria: pure and mixed. If we assume group sizes of 1 then our results reduce to that of \textcite{gallice2019co}, where exactly one mixed strategy exists. However, if at least one group size is larger than one we find at least two mixed strategy equilibrium. We illustrate this as an instance of players becoming pivotal and inducing other players to contribute - even though subsequent groups might still witness defections within a given player's group. In particular, our result on pivotal agent contributes towards the literature on pivotal voter model.\footnote{See \textcite{ledyard1984pure,palfrey1983strategic,palfrey1985voter}. In these models voter's main concern is to affect the outcome of the election, and they are not concerned about motivations like warm-glow and altruism. Agents vote if the expected benefit voting is larger the cost.} Our analysis (Theorem \ref{thm:mixedStrat}) shows that a player will be pivotal in encouraging others to contribute even if they witness a sample of defection (no contribution in history). In fact, we show that an agent's pivotality has a unique local maxima as a function of the probability of contribution in a mixed strategy scenario; when a player decides to contribute, succeeding groups will observe a smaller number of defections and, therefore, increase their likely-hood to contribute.  This influence reaches a maximum and then start to decline as her contribution becomes less influential in inducing others to contribute.

We mainly contribute towards the literature on sequential (or dynamic) public goods contribution games, where previous work studies how the timing of contribution can affect the total amount raised through contribution.  \textcite{Varian1994} finds that sequential contributions lowers total contribution.\footnote{\textcite{admati1991joint} predicts an inefficient allocation of resources even for a contributions mechanism that uses sequential contributions. The paper also shows that the unique subgame-perfect equilibrium in a sequential threshold public goods game with full refund leads to an efficient outcome. See \textcite{bag2011sequential} for an extension of this work.} Experimental evidence suggests that sequential mechanism might perform better in raising funds compared to simultaneous contribution mechanism \parencite{Andreoni2002,gachter2010sequential}.\footnote{Several papers focus on endogenously determining the public goods game to play. See for example \textcite{potters2005after,romano2001charities,vesterlund2003informational,bracha2011seeds}.}
\textcite{figuieres2012vanishing} tests the effect of information about contribution levels in a sequential public goods game and show that the level of contributions increases when subjects are informed about the contributions of the predecessors. In a more recent work, \textcite{tajika2020contribute} models a dynamic public goods game where players are allowed to make a one-off discrete contribution. Each player decides the period and the amount to contribute, and has full information about the total contributions prior to making their decision. We also contribute to the literature on games with position uncertainty and observational learning. Previous work on position uncertainly takes a different approach than us, where they typically consider scenarios where a principal can choose the history of past actions to show to the agents \parencite{nishihara1997resolution}\footnote{We adopt a different approach from \textcite{gershkov2009optimal,doval2020sequential}. For example, \textcite{gershkov2009optimal} models optimal voting schemes where the principal chooses a protocol to induce voters to acquire costly information and reveal it truthfully. See, for example, \textcite{banerjee1992simple}, \textcite{ccelen2004observational}, \textcite{hendricks2012observational}, \textcite{guarino2013social}, \textcite{garcia2018consumer} on papers related to observational learning.}.  We follow a different approach where agents learn directly by observing their immediate predecessor groups' contributions.

The outline of the paper is as follows: Section 2 describes our model. The main results are discussed in Section 3 where we first treat the case of symmetric groups and then extend our results to asymmetric ones. Unless otherwise stated, all the proofs are in Appendix.

\section{The Model for Groups}
We adapt the notation developed by \textcite{gallice2019co} to suit the group-based nature of our results. Let $I=\{1,2,...,N\}$ be a set of players and consider a game with $b\leq N$ many groups composed of players from $I$. More precisely, let $q: I \to [b]$ be an onto function, where $[b]$ denotes the set of the first $b$ positive integers, and $\mathcal{Q}$ be the set of all such functions. We assume that all said functions are equally likely: $Pr(Q=q) = \sum_{i=0}^{b-1}(b-i)^{N}{b\choose i}$ for all $q\in \mathcal{Q}$, where $Q$ is a random variable. In turn, Group $t$ becomes the collection of players $j\in I$ for which $Q(j) = t$.

The timing of the game is as follows. First,  Nature (a non-strategic player) chooses the order of the sequence $q$. Players do not observe the choice of Nature. Players then observe a sample of past actions through a simple sampling rule (which we formally describe below). Players in each group must choose one of two actions simultaneously and independently. More precisely, Player $i$ in Group $t$ must choose one of two actions $a_{i,t} \in \{C,D\}$: action $a_{i,t}=C$ implies a contribution of  one unit while $a_{i,t}=D$ implies no contribution. The total group contribution goes towards the common fund. The common fund is then redistributed to all players and we assume $r$ as the rate of return. We adopt the standard notation $G_{-i}= \sum_{j\neq i} \mathbbm{1}\{a_{j,t}=C\}$ to denote the number of players (other than $i$) who contribute. Payoffs $u_i(a_i,G_{-i})$ of player $i$ is as follows
\[
    u_i(C,G_{-i}) = \frac{r}{N}(G_{-i}+1)-1
\]
\[
    u_i(D,G_{-i})=\frac{r}{N}(G_{-i}),
\]
where $r$ is the return from contributions, and $\frac{r}{N}$ gives the marginal per capita return from the public goods. Whilst for a fixed $G_{-i}$ it is evident that $u_i(C,G_{-i}) \leq u_i(D,G_{-i})$, it is the effect of Player $i$'s contribution, or lack thereof, on subsequent players that we study here. Hence, the value of $G_{-i}$ will often change for different values of $a_{i,t}$.

For $t\leq b$, the symbol $A_t = (a_{i , t})$ denotes actions of the players in Group $t$ and $h_t=(A_t)_{t=1}^{t-1}$ denotes a possible history of actions up to Group $t-1$. Let $H_t$ be the random history at period $t$ with realizations $h_t\in \mathcal{H}_t$ and let $\mathcal{H}_1=\{\emptyset \}$.\footnote{We use period and position interchangeably throughout the paper, as they imply the same in our context.} Players play an extensive form game with imperfect information where a player is given a sample $\zeta$ of the actions of the $m\geq 1$ preceding groups. The value of $m$ is known {\it a priori}. That is, players observe a sample $\zeta=(\zeta',\zeta'')$, where $\zeta'$ states the number of groups sampled and $\zeta''$ states the number of contributors in that sample. Of course, a player in Group $t<m$ will be provided with a sample $\zeta=(t-1,\zeta'')$. In turn, players in the first group observe $\zeta_1=(0,0)$ and players in groups positioned between 2 to $m$ observe the actions of all their predecessor groups, and so they can infer their exact position in the sequence from the sample they receive. Formally,  letting $g_{t}=\sum_{Q(i)= t}\mathbbm{1}\{a_{i,t}=C\} $ denote the total contributions in Group $t$, players in Group $t$ receive a sample 
$\zeta_t: \mathcal{H}_t \to \mathcal{S}=\N^2$ containing a tuple:

$$ \zeta_t(h_t)= \Big( \underbrace{\min\{m,t-1\}}_{=\zeta'},\quad \underbrace{\sum_{k=\max\{1,m-t\}}^{t-1}g_{k}}_{=\zeta''}\Big)$$

We use \textcite{kreps1982sequential} sequential equilibrium. Player $i$'s strategy is a function $\sigma_i(C|\zeta): \mathcal{S}\to [0,1]$ that specifies the probability of contributing given the sample received. Let $\sigma=\{\sigma_i\}_{i\in I}$ denote a strategy profile and $\mu=\{\mu_i\}_{i\in I}$ a system of beliefs. A pair $(\sigma, \mu)$ represents an \textit{assessment}. Assessment $(\sigma ^*, \mu^*)$ is a {\it sequential equilibrium} if $\sigma^*$ is sequentially rational given $\mu^*$, and $\mu^*$ is consistent given $\sigma^*$. Let $\mathcal{H}= \cup_{t=1}^{n}\mathcal{H}_t$ be the set of all possible histories. Given a profile of play $\sigma$ let $\mu_i$ denote Player $i$'s beliefs about the history of play : $\mu_i(h|\zeta) : \mathcal{H} \times \mathcal{S}\to [0,1]$, with $\sum_{h \in \mathcal{H}} \mu_i(h|\zeta) =1$ for all $\zeta \in \mathcal{S}$.

\section{Results}

We first treat the case where all groups are of the same size $n = \frac{N}{b}$ and later extend our results to the asymmetric case in Section~\ref{sec:asym}.

\subsection{Symmetric groups with sample size $m>1$\label{sec:sym>1}}

Assume a sample size of $m>1$. Given a sample $\zeta = (\zeta',\zeta'')$ with $m \geq \zeta' > 1$ we demonstrate that the simple strategy of ``contributing unless a defection is observed" yields a sequential equilibrium provided that $r$ is {\it large enough.} Since the proof of the following is simple and self-contained we present it here. For completeness, in what follows we let $\sigma_i^k$ denote a sequence of strategies with $\sigma_i^k(C\mid \zeta)= 1 - (s_k)$ and $\sigma_i^k(D\mid \zeta) = (s_k)$ where $(s_k)$ is any non-trivial real null sequence, and put $\mu_i^k$ as the induced belief for strategy $\sigma_i^k$ for each $k\in \N$.
\begin{proposition}

\label{lem:m>1Symmetric} Consider the profile of play

\begin{align*}
    \sigma_i^*(C\mid \zeta) =
\begin{cases}
1, & \text{if $\zeta$ contains no defections} \\
0, & \text{otherwise.}
\end{cases}
\end{align*}
It follows that $(\sigma^*,\mu^*)$ is a sequential equilibrium provided that $r \geq \frac{2N}{2N - (b+m-1)n}$.
\end{proposition}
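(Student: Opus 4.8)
The plan is to verify the two defining requirements of a sequential equilibrium separately: consistency of the beliefs $\mu^*$ and sequential rationality of $\sigma^*$. Consistency is essentially handed to us by the perturbation fixed in the setup. Each $\sigma_i^k$ is totally mixed and converges pointwise to $\sigma^*$ as $s_k\to 0$, so $\mu^*=\lim_k\mu_i^k$ is a well-defined consistent assessment; the only substantive thing to extract from this limit is the player's belief about her own position conditional on the sample she observes. When $\zeta'=\min\{m,t-1\}<m$ the player infers her position exactly, $t=\zeta'+1$. When $\zeta'=m$ she knows only that $t\in\{m+1,\dots,b\}$, and since all balanced assignments are equally likely each position carries prior weight $1/b$; because a defection-free window of $m$ full groups is observed with probability tending to $1$ along the perturbation for every such $t$, the limiting belief is uniform on $\{m+1,\dots,b\}$. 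This is the step I would be most careful about, since the threshold on $r$ is an average taken against precisely this belief.

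For sequential rationality I would split on the sample. If $\zeta$ contains no defection, contributing sustains the cooperative path: every later group then samples only full windows and contributes, the realized total is $N$, and the player's payoff is $r-1$. A unilateral deviation to $D$ instead makes her group fall short of a full window, which every group that subsequently samples it detects; by $\sigma^*$ each such group defects, and through the overlapping length-$m$ sampling windows this defection propagates to all of groups $t+1,\dots,b$. The payoff comparison therefore pits $r-1$ against the truncated continuation in which contributions from group $t$ onward collapse, and a routine continuation-payoff computation gives a per-position gain of the form $\tfrac{r}{N}\,n(b-t+1)-1$. Averaging this over the uniform belief on $\{m+1,\dots,b\}$ (the case $\zeta'=m$, which is binding because it carries the positions with the least remaining future to leverage) and using $N=bn$ yields an expected gain proportional to $\tfrac{r}{N}\,n\cdot\tfrac{b-m+1}{2}-1$, which is nonnegative exactly when $r\ge \tfrac{2N}{2N-(b+m-1)n}$. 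The known-position samples $2\le\zeta'<m$ force $t\le m$ and hence a strictly larger per-position gain, so they impose no additional constraint.

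If $\zeta$ already contains a defection, I would argue that deviating to $C$ is futile: the observed defection is fixed in the histories that later groups sample, so the cascade of defections proceeds regardless of the player's choice, and contributing only forfeits the net per-unit cost $1-\tfrac{r}{N}\ge 0$ (in the social-dilemma regime $r\le N$). Hence $D$ is optimal there and no further restriction on $r$ arises; combining the two cases gives the stated bound.

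The main obstacle is the bookkeeping in the no-defection case. One must track exactly how a single deviation moves through the length-$m$ sampling windows so that \emph{every} successor group is eventually triggered—rather than only the immediate $m$ neighbours—and then aggregate the resulting loss of public good across the player's uncertain position. Getting that aggregation right, together with the uniform-belief computation from the consistency step, is what pins down the binding information set $\zeta'=m$ and produces the threshold $r\ge \tfrac{2N}{2N-(b+m-1)n}$.
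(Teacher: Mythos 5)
Your proposal is correct and follows essentially the same route as the paper's own proof: upon a sample containing a defection the cascade is unstoppable so $D$ is optimal there, while for defection-free samples the binding information set is $\zeta'=m$ with the uniform belief over positions $\{m+1,\dots,b\}$, and averaging the contribute-versus-defect gain against that belief yields exactly the threshold $r \geq \frac{2N}{2N-(b+m-1)n}$. Your handling of consistency and of the known-position samples $\zeta'<m$ is, if anything, slightly more explicit than the paper's.
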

\begin{proof}
Consider a player in Group $t$ and a history $\zeta = (m,c)$ with $mn>c$. For groups of size at least $2$, an agent that observes $\zeta$ is aware that every other agent in her group also witnesses a defection and given the pure profile of play, the effect of the defection would extend beyond her group regardless of her contribution, or lack thereof. Even when $n=1$ players in subsequent groups will still witness defections as $m>1$ and $c+1 + kn = mn$ for some $k\in \N$. This means that if the defection occurs in Group $t'<t$ and all other players in groups succeeding $t'$ must defect, then a defection must inevitably appear in Group $t-1$. It follows that defection after defection is optimum given any value of $r$. \\

In contrast to the previous case, if $\zeta = (\zeta',\zeta'n)$ it is simple to deduce that a player in Group $t>m$ will contribute precisely when 
\[
\frac{r}{N}N -1 \geq \frac{r}{N}\frac{(N+m-1)b}{2}
\]
since she expects her position to be in the mid-point between $m+1$ and $b$. In turn,
\[
r \geq \frac{2N}{2N - (b+m-1)n}.
\]
That is, perpetuating contribution only makes sense when $r$ is at least as large as $\frac{2N}{2N - (b+m-1)n}$.
 Any other player in a Group $t<m$ will have an even greater incentive to contribute as their gains from contributing will be larger than those in a Group $t'>m$.
\end{proof} 

\subsection{Symmetric groups with sample size $m=1$\label{sec:sym=1}}

In this section we assume that $m=1<b$. When $n=1$, \textcite{gallice2019co} prove that a pure strategy equilibrium exists for all $r\in [2,3-\frac{3}{b+1}]$. In contrast, we show that for the scenario with $n>1$, a pure strategy equilibrium exists for all values of $r\geq 2$. The main difference with the $n=1$ and the $n>1$ case lies in the inability of an agent in the latter that observes a defection to prevent further defections since she is aware that every other agent in her group also witnesses said defection; much like the case of $m>1$ with defections, as treated in the previous section. It follows, as illustrated in the Appendix (Section \ref{apendix:Proofs}), that a pure strategy Nash equilibrium exists for all such values of $r$.
\begin{theorem}[Pure Strategies with $m=1<n$]\label{thm:pureStrat} For any value of $r\geq 2$ and given the profile of play
\begin{align*}
    \sigma_i^*(C\mid \zeta) =
\begin{cases}
1, & \zeta \in\{(0,0), (1,n)\}\\
0, & \text{otherwise}
\end{cases}
\end{align*}
for all $i\in I$, the assessment $(\sigma^*, \mu^*)$ is a sequential equilibrium,
\end{theorem}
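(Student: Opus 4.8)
The plan is to verify the two requirements of a sequential equilibrium---sequential rationality and consistency---via the one-shot deviation principle, checking each type of information set separately. Under $\sigma^*$ the on-path play is transparent: a player in Group $1$ receives $\zeta=(0,0)$ and contributes, so $g_1=n$; by induction each later group receives $(1,n)$ and contributes, so every group contributes fully and the induced outcome is full contribution. Since $\zeta'=0$ occurs only for Group $1$ and $\zeta'=1$ for every Group $t\geq 2$, there are really only three kinds of information sets to examine: $(0,0)$, $(1,n)$, and $(1,c)$ with $0\le c<n$. For consistency I would take the totally mixed sequence $\sigma^k$ described just before Proposition~\ref{lem:m>1Symmetric} (play the prescribed pure action with probability $1-s_k$ and tremble with probability $s_k$), compute the Bayesian beliefs $\mu^k$ it induces, and pass to the limit. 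The one subtle point is that on path $(1,n)$ is reached by every Group $t\in\{2,\dots,b\}$, so---using that the prior marginal of a player's own position is uniform on $[b]$ by symmetry of the surjection prior---the limiting belief $\mu^*$ at $(1,n)$ is uniform on $\{2,\dots,b\}$, giving expected position $\tfrac{b+2}{2}$.

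The conceptually central step, and the one that separates the $n>1$ regime from the $n=1$ case of \textcite{gallice2019co}, is the information set $(1,c)$ with $c<n$. Here I would argue that a single player is powerless: each of her $n-1$ group-mates observes the same sample $(1,c)$ and therefore defects under $\sigma^*$, so whether or not she contributes her group's total is at most $1<n$, and Group $t+1$ (which contributes only after seeing $n$) defects either way. Hence her action leaves the entire continuation unchanged, $G_{-i}$ is identical under $C$ and $D$, and the comparison collapses to $u_i(C)-u_i(D)=\tfrac{r}{N}-1<0$ for every $r$ in the dilemma range $r<N$ and for \emph{every} belief over her position. This belief-free strict preference for $D$ is exactly what removes the upper bound on $r$ present in the $n=1$ analysis, and it renders the off-path beliefs at these information sets irrelevant.

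It remains to check the two full-contribution information sets. At $(0,0)$ the player knows she is in Group $1$: contributing triggers the full cascade and yields $r-1$, whereas a unilateral defection leaves her $n-1$ group-mates contributing but kills every later group, yielding $\tfrac{r}{N}(n-1)$; the inequality $r-1\ge \tfrac{r}{N}(n-1)$ reduces to $r\ge \tfrac{N}{N-n+1}$, which lies below $2$ whenever $b\ge 2$, so it is slack. The binding case is $(1,n)$: contributing sustains the cascade to the last group and gives $r-1$, while defecting (group-mates still contributing) produces $g_t=n-1$, kills the downstream groups, and leaves $G_{-i}$ equal to the contributions of all earlier groups together with her still-contributing group-mates. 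Taking expectations against the uniform belief on $\{2,\dots,b\}$ and comparing $r-1$ with the resulting expected defection payoff is where the threshold on $r$ emerges, and this is the main obstacle: the whole statement hinges on showing that this single inequality reduces to $r\ge 2$. I would carry out this computation carefully---tracking whether the $n-1$ contributing group-mates enter the expected defection payoff---since that bookkeeping is precisely what fixes the constant in the bound. Once the $(1,n)$ inequality is established, collecting the three cases yields sequential rationality, and combined with the consistency construction this completes the proof.
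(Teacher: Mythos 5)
Your decomposition into the three information-set types, the tremble-based consistency construction, the $(0,0)$ check, and especially the ``powerless player'' argument at $(1,c)$ with $c<n$ are all correct. That last argument is in fact the entire explicit content of the paper's own proof, which computes $H(0)=\frac{r}{N}\sum_{t=2}^{b}\psi_t(0)\phi_t(0)-1=\frac{r}{N}-1<0$ using $\phi_t(0)=1$ for $n>1$, and it is indeed what removes the upper bound on $r$ present in the $n=1$ analysis of \textcite{gallice2019co}.

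The genuine gap is the step you explicitly defer: the on-path incentive at $(1,n)$. Carrying out the computation you describe, contributing yields $r-1$, while defecting leaves the $(t-1)n$ predecessor contributions and the $n-1$ still-contributing group-mates in place and kills every downstream group, so the gain from contributing in Group $t$ is $\frac{r}{N}\bigl((b-t)n+1\bigr)-1$; this is exactly $\frac{r}{N}\phi_t(0,(1,n))-1$ with $\phi_t(0,(1,n))=(b-t)n+1$ from the paper's Lemma~\ref{lem:psiPhi}. Averaging against the uniform belief on $\{2,\dots,b\}$ gives the condition $\frac{r}{N}\bigl(1+\tfrac{n(b-2)}{2}\bigr)\geq 1$, i.e.\ $r\geq\frac{2N}{N-2n+2}$. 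This equals $2$ only when $n=1$ and is strictly greater than $2$ for every $n>1$ (for instance $n=2$, $b=3$ requires $r\geq 3$), approaching $2$ only as $b\to\infty$. So the inequality you hoped ``reduces to $r\geq 2$'' does not: the deferred bookkeeping is not a formality but changes the constant in the statement. For what it is worth, the paper's appendix proof never performs this check either --- it verifies only the off-path condition $H(0)<0$ together with the ordering of incentives in Lemma~\ref{lem:inequalities} --- so completing your plan honestly would expose that the binding on-path constraint is $r\geq\frac{2N}{N-2n+2}$ rather than $r\geq 2$; as submitted, however, your proof is incomplete at precisely the step on which the theorem hinges.
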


Continuing with the case $m=1<n$, we demonstrate that - concurrent to the equilibrium from Theorem~\ref{thm:pureStrat} - for large enough values of $r$ there are mixed strategy equilibria where agents forgive defections with probability $\gamma\in (0,1)$. In fact, surprisingly enough, for all but the smallest such value of $r$ there exist {\it at least two} mixed strategy equilibria.

\begin{theorem}[Mixed Strategies with $m=1<n$]\label{thm:mixedStrat} There exists a value $r^\sharp<N$ with a corresponding $\gamma^\sharp \in (0,1)$ where, for all $i\in I$, the profile of play
    \begin{align*}
    \sigma_i^*(C\mid \zeta) =
\begin{cases}
1, & \zeta\in\{(0,0), (1,n)\}\\
\gamma^\sharp, & \text{otherwise}
\end{cases}
\end{align*}
yields a sequential equilibrium $(\sigma^*, \mu^*)$. Moreover, for all values $r>r^\sharp$ there exist two distinct values $\gamma^1_r,\gamma^2_r \in(0,1)$ so that, for all $i\in I$, the profiles of play

\begin{align*}
    \sigma_{i,1}^*(C\mid \zeta) =
\begin{cases}
1, & \zeta \in\{(0,0), (1,n)\}\\
\gamma^1_r, & \text{otherwise},
\end{cases}
\end{align*}

\begin{align*}
    \sigma_{i,2}^*(C\mid \zeta) =
\begin{cases}
1, & \zeta \in\{(0,0), (1,n)\}\\
\gamma^2_r, & \text{otherwise}
\end{cases}
\end{align*}
 establish two distinct sequential equilibria $(\sigma^*_1, \mu^*_1)$ and $(\sigma^*_1, \mu^*_2)$, respectively.

\end{theorem}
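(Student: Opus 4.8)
The plan is to reduce the claim to counting the solutions of a single indifference equation. Under the proposed profile the only information sets at which the prescribed probability is interior are the ``defection'' samples $(1,c)$ with $c<n$; at $(0,0)$ and $(1,n)$ the player contributes with probability one, and optimality there is verified exactly as in Theorem~\ref{thm:pureStrat} (at such a sample the remaining $n-1$ members of her group contribute with probability one, so she is pivotal for the entire downstream cascade and the resulting gain is positive in the relevant range of $r$). Thus everything hinges on making a player who observes a defection \emph{indifferent} between $C$ and $D$, so that any $\gamma\in(0,1)$ is sequentially rational there.

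Next I would write the net gain from contributing at a defection sample as
\[
\Delta(\gamma)=\frac{r}{N}\bigl(1+\pi(\gamma)\bigr)-1,\qquad \pi(\gamma):=E\bigl[G_{-i}^{C}-G_{-i}^{D}\bigr],
\]
so that indifference is the equation $\pi(\gamma)=\tfrac{N-r}{r}$. To evaluate $\pi$ I would note that a player in Group $t$ alters the sample seen by Group $t+1$ only in the pivotal event that the other $n-1$ members of her group all contribute, which under mixing has probability $\gamma^{n-1}$; in that event contributing sets $g_t=n$ and triggers full contribution throughout Groups $t+1,\dots,b$, whereas defecting leaves Group $t+1$ facing a defection. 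Letting $W_{s}(\gamma)$ denote the expected continuation contributions of Groups $s,\dots,b$ when Group $s$ observes a defection, a one-step argument yields the recursion
\[
W_{s}(\gamma)=n\gamma+\gamma^{n}(b-s)n+\bigl(1-\gamma^{n}\bigr)W_{s+1}(\gamma),\qquad W_{b+1}(\gamma)=0,
\]
and hence
\[
\pi(\gamma)=\gamma^{\,n-1}\sum_{t}\mu^{*}(t)\,\bigl[(b-t)n-W_{t+1}(\gamma)\bigr],
\]
where $\mu^{*}(t)$ is the belief over the player's position, obtained as the limit of the Bayesian beliefs induced by the totally mixed sequence $(\sigma_i^{k},\mu_i^{k})$. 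Solving the recursion and pinning down $\mu^{*}$ is the computational core of the argument.

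With an explicit $\pi$ the rest is geometry. Because $n>1$ we have $\gamma^{n-1}\to0$ as $\gamma\to0$, and $W_{t+1}(1)=(b-t)n$ forces every bracket to vanish at $\gamma=1$; therefore $\pi(0)=\pi(1)=0$, while $W_{t+1}(\gamma)<(b-t)n$ for $\gamma\in(0,1)$ gives $\pi(\gamma)>0$ on the interior. So $\pi$ is a continuous hump vanishing at both endpoints. The hard part—and the content of the ``unique local maximum'' assertion in the introduction—is to prove that this hump is \emph{single-peaked}: that $\pi'$ has exactly one zero on $(0,1)$, changing sign from positive to negative. I expect this to follow by substituting the closed form of $W_{s}(\gamma)$ into $\pi$ and showing that $\pi'(\gamma)=0$ reduces to a single monotone crossing; this is the step where the size restriction $n>1$ and the precise form of $\mu^{*}$ do the work. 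Writing $\gamma^{\sharp}$ for the maximiser and $\pi^{\max}=\pi(\gamma^{\sharp})$, positivity of $\pi^{\max}$ gives $r^{\sharp}:=\tfrac{N}{1+\pi^{\max}}<N$.

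Finally I would read off the solution count. The level $\tfrac{N-r}{r}$ is strictly decreasing in $r$ and equals $\pi^{\max}$ exactly at $r=r^{\sharp}$; by single-peakedness and the intermediate value theorem the horizontal line at height $\tfrac{N-r}{r}$ meets the graph of $\pi$ in the single point $\gamma^{\sharp}$ when $r=r^{\sharp}$, and in exactly two points $\gamma^{1}_{r}<\gamma^{\sharp}<\gamma^{2}_{r}$ for every $r$ with $r^{\sharp}<r<N$ (the free-riding range in which $\tfrac{N-r}{r}>0$). Each such $\gamma$ makes every player indifferent at the defection samples and strictly prefer $C$ at $(0,0)$ and $(1,n)$, so the associated strategy is sequentially rational, and $\mu^{*}$ is consistent by construction as the limit of the beliefs under $\sigma_i^{k}$. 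This produces the single equilibrium $(\sigma^{*},\mu^{*})$ at $r^{\sharp}$ and the two distinct equilibria $(\sigma^{*}_{1},\mu^{*}_{1})$, $(\sigma^{*}_{2},\mu^{*}_{2})$ for $r>r^{\sharp}$, as claimed.
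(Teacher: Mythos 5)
Your overall strategy is the paper's: write the net gain from contributing at a defection sample as a function of $\gamma$, show it equals $\tfrac{r}{N}-1<0$ at both endpoints (this is where $n>1$ enters, via the pivotality factor $\gamma^{n-1}$) and is strictly larger on $(0,1)$, and then run the intermediate value theorem on either side of an interior maximiser. Your recursion $W_s(\gamma)=n\gamma+\gamma^n(b-s)n+(1-\gamma^n)W_{s+1}(\gamma)$ is a clean repackaging of the paper's group-by-group computation of $E_{\mu^C}(G_{t+k})-E_{\mu^D}(G_{t+k})=n\gamma^{n-1}(1-\gamma)(1-\gamma^n)^{k-1}$ in Lemma~\ref{lem:psiPhi}: setting $D_{s}=(b-s+1)n-W_{s}$ gives $D_s=n(1-\gamma)+(1-\gamma^n)D_{s+1}$, whose solution reproduces the paper's $\phi_t(\gamma)-1$, and your $\mu^*(t)$ is the paper's $\psi_t(\gamma)$. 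So $\tfrac{r}{N}(1+\pi(\gamma))-1$ is exactly the paper's $H(\gamma)$.

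The one substantive problem is that you have made single-peakedness of $\pi$ the ``hard part'' and left it unproven (``I expect this to follow\ldots''), and your final solution count leans on it. You do not need it. The theorem asserts the existence of \emph{two distinct} roots, not \emph{exactly two}: since $\pi$ is continuous with $\pi(0)=\pi(1)=0$ and $\pi>0$ on $(0,1)$, it attains a maximum at some interior $\gamma^\sharp$ with $\pi(\gamma^\sharp)>0$; for $r>r^\sharp=\tfrac{N}{1+\pi(\gamma^\sharp)}$ the level $\tfrac{N-r}{r}$ lies strictly between $0$ and $\pi(\gamma^\sharp)$, so the intermediate value theorem applied separately on $(0,\gamma^\sharp)$ and $(\gamma^\sharp,1)$ already yields two distinct solutions. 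This is precisely how the paper argues (via Rolle's theorem and continuity in $r$), and it sidesteps the monotone-crossing analysis entirely; if you do want the sharper ``exactly two'' statement you would have to supply that argument, but it is not part of the claim. A smaller point: optimality at $(0,0)$ and $(1,n)$ is not ``exactly as in Theorem~\ref{thm:pureStrat}'', because there the continuation after a deviation is governed by $\gamma=0$ rather than $\gamma\in(0,1)$; the clean way to close this is the paper's Lemma~\ref{lem:inequalities}, which shows the expected gain at $(0,0)$ and at $(1,n)$ dominates the gain at a defection sample, and the latter is zero at the equilibrium $\gamma$, so contribution with probability one remains a best response there.
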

 
 To understand the above results consider the following standard strategy given a profile $\zeta$
\begin{align*}
    \sigma^*(C\mid \zeta) =
\begin{cases}
1, & \zeta \in\{(0,0), (1,n)\}\\
\gamma, & \text{otherwise}.
\end{cases}
\end{align*}
where $\gamma \in(0,1)$, and for a profile $\overline{\zeta}$ - with a defection - set for a Player $j$ in Group $t$ the strategies
\[
\sigma_j^C(\zeta)
=
\begin{cases}
\sigma^*_j(\zeta), & \zeta \not = \overline{\zeta}\\
1, & \zeta = \overline{\zeta}.
\end{cases}
\]

\[
\sigma_j^D(\zeta)
=
\begin{cases}
\sigma^*_j(\zeta), & \zeta \not = \overline{\zeta}\\
0, & \zeta = \overline{\zeta}.
\end{cases}
\]

\noindent 
Put $\phi_t(\gamma)$ and $\psi_t(\gamma)$ as the number of additional contributions Player $j$ expects from contributing rather than defecting, and the likelihood  of being in Group $t$ after observing a defection, respectively. The function
 \[
H(\gamma) = \frac{r}{N}\sum_{t=2}^b\psi_t(\gamma)\phi_t(\gamma) - 1
\]
then describes the difference in utility between contributing and defecting upon witnessing a defection. Mathematically, as illustrated in Appendix (Section \ref{apendix:Proofs}), we have that for all $\gamma\in (0,1)$: $H(\gamma) > H(0) = H(1) = \frac{r}{N} -1<0$. It follows, by Rolle's Theorem, that $H(\gamma)$ must attain at least one local maximum between $(0,1)$. What is key in proving Theorem~\ref{thm:mixedStrat} is that for at least one $r^\sharp<N$ - and thus all other $r>r^\sharp$ -  $H(\gamma)$'s local maximum is positive. In which case, $H(\gamma)$ exhibits at least two roots and, thus, two distinct sequential equilibria as described in the Theorem. 
Intuitively, we have that as $\gamma$ increases Player $i$ becomes more pivotal in inducing others to contribute. When $H(\gamma)$ reaches its peak the player's pivotality reaches its maximum, but it then starts to decline as her contribution becomes less influential in inducing others to contribute. So we see that $H$ decreases and eventually reaches 0 again. Player $i$ is only pivotal when $\gamma \in [\gamma^*, \gamma ^\#]$ (see Figure~\ref{fig:1}). In Figure~\ref{fig:1} we can also see that when group size is exactly 1, we get the same result as \textcite{gallice2019co}.  
\begin{figure}[htbp]
    \centering
    \includegraphics[scale=0.75]{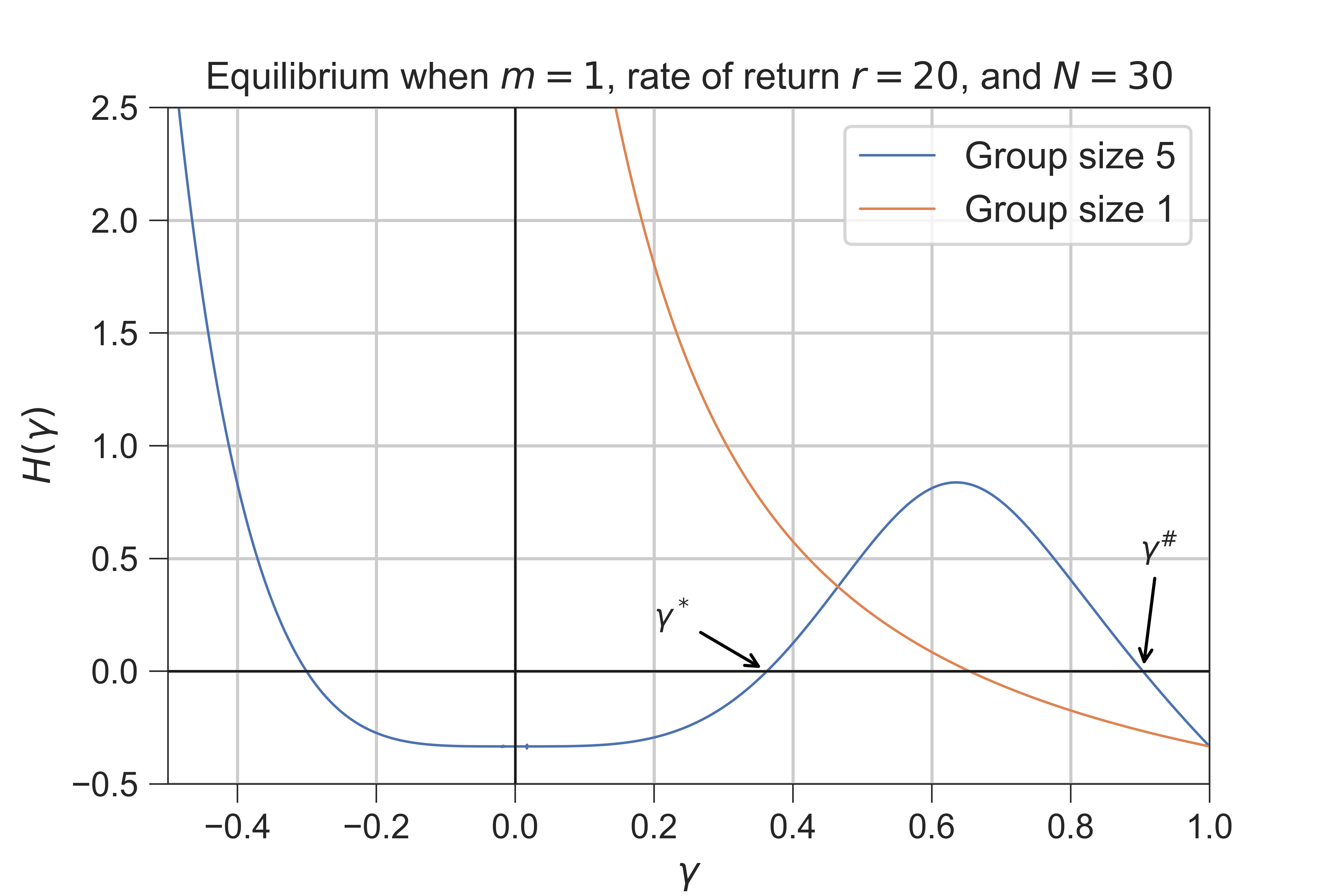}
    \caption{Mixed Strategy Equilibrium}
    \label{fig:1}
\end{figure}

Figure \ref{fig:2} shows that as group size becomes smaller the $H(\gamma)$ peaks become flatter and the equilibrium points move to the left, implying that as the group size reduces, each individual player becomes more influential in persuading others to contribute.

\begin{figure}[htbp]
    \centering
    \includegraphics[scale=0.75]{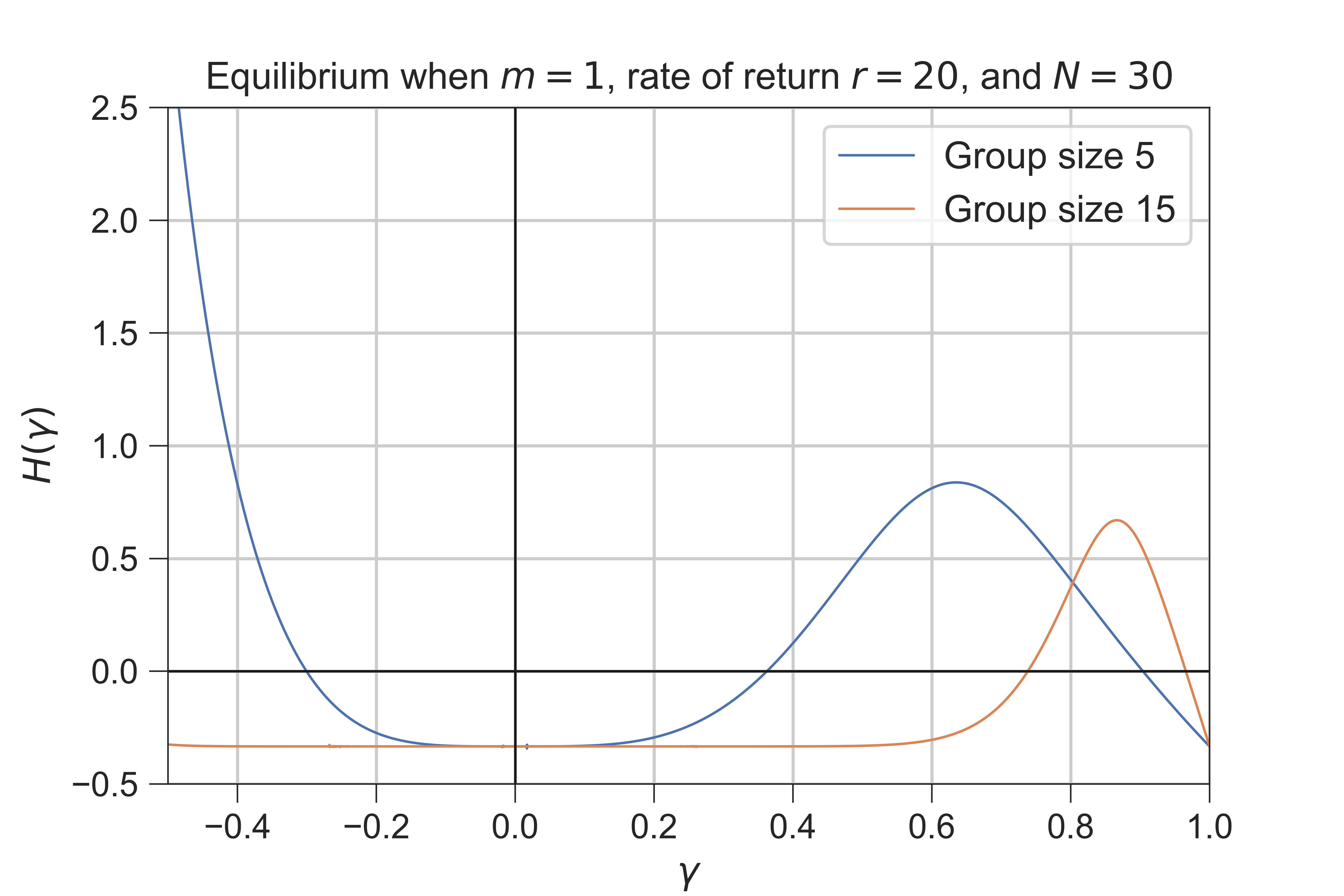}
    \caption{Mixed Strategy Equilibrium-Comparing group sizes}
    \label{fig:2}
\end{figure}

\subsection{Asymmetric groups}\label{sec:asym}
We are now in a position to discuss asymmetric groups: consider $b$ many groups where each group has $n_i$ many members with $\sum_{i=1}^b n_i = N$. Although the results in this section subsume those from Sections~\ref{sec:sym>1} and ~\ref{sec:sym=1} the insights gained from the latter are essential in developing the asymmetric case. In particular, the asymmetric scenario yields similar results to those in the symmetric case.

\begin{proposition}[Asymmetric groups for $m>1$]  Consider the profile of play

\begin{align*}
    \sigma_i^*(C\mid \zeta) =
\begin{cases}
1, & \text{ if $\zeta$ contains no defections} \\
0, & \text{otherwise.}
\end{cases}
\end{align*}
It follows that $(\sigma^*,\mu^*)$ is a sequential equilibrium provided that $
r \geq \frac{2N}{2N-(b+m-1)n}.
$ where $n = \max\{n_1, \ldots, n_k\}$. If the groups sizes $n_1, \ldots, n_b$ are chosen at random then a tighter bound can be deduced by letting $n = \frac{n_1+ \ldots + n_b}{b}$.
\end{proposition}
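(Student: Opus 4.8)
The plan is to reduce the asymmetric proposition to the symmetric argument already established in Proposition~\ref{lem:m>1Symmetric}, since the structural logic of ``defection propagates and contribution is worthwhile only when $r$ is large enough'' is essentially size-agnostic. First I would verify that the defection half of the equilibrium goes through verbatim: given the pure profile, once a defection appears in some Group $t'$, every player in every succeeding group that samples it will defect, and because $m>1$ a player observing a defection knows a defection will reappear in her immediate predecessor sample regardless of her own action. The only place group sizes enter is through the counting argument ($c+1+\sum n_k$ never closing the gap to a full-contribution sample), and I would check that replacing the uniform $n$ by the relevant $n_i$ leaves this inference intact; it does, because the argument only needs that the observed deficit cannot be erased by a single player's switch.

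The substantive step is the contribution half, where I would recompute the indifference/incentive inequality for a player who observes a full-contribution sample $\zeta=(\zeta',\zeta'\text{-many full groups})$ and must weigh perpetuating contribution. In the symmetric case the key term was the expected number of successors, computed via the midpoint $\frac{(b+m-1)}{2}$ of the positions $\{m+1,\dots,b\}$ and multiplied by the common size $n$. With heterogeneous sizes, a player's expected future contributions from perpetuating the chain is $\frac{r}{N}$ times the expected total membership of the groups strictly after her position, conditional on her position being uniformly distributed over the feasible range. The plan is to bound this expected future membership from below by replacing every $n_i$ with the maximum $n=\max_i n_i$, which makes each successor group look as large as possible and hence makes contributing look as costly-to-justify as possible; this yields the stated worst-case threshold $r\geq \frac{2N}{2N-(b+m-1)n}$, matching the symmetric bound with $n$ now the maximum size.

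For the refinement under randomly chosen sizes, I would take expectations over the group-size draw rather than using the worst case. Here the relevant quantity is the expected per-group size, so replacing the conditional sum of successor sizes by its mean $\frac{n_1+\cdots+n_b}{b}$ per group and again using the midpoint count $\frac{b+m-1}{2}$ of successor groups reproduces the same algebraic form but with the \emph{average} $n$ substituted for the \emph{maximum} $n$. Since the average never exceeds the maximum, the resulting threshold $\frac{2N}{2N-(b+m-1)\bar n}$ is no larger, giving the promised tighter bound. The remaining players in early groups ($t<m$) contribute a fortiori, exactly as in the symmetric proof, because their realized position is known to be small and their expected downstream gains strictly exceed those of a player in the ambiguous mid-range.

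I expect the main obstacle to be the position-uncertainty bookkeeping in the contribution inequality: in the symmetric case the expected successor count collapsed cleanly into a midpoint times a constant size, but with heterogeneous $n_i$ the expected total membership of successor groups depends on \emph{which} positions are feasible and on the correlation (or lack thereof) between a group's position and its size under the uniform assignment $Q$. The care needed is to confirm that, because positions and the size profile are assigned independently by Nature, the expectation factorizes into (expected number of successor groups) times (expected size per group), legitimizing the substitution of $n$ by either $\max_i n_i$ or $\bar n$; once that independence is pinned down, the inequality manipulations mirror Proposition~\ref{lem:m>1Symmetric} exactly.
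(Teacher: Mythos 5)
Your overall skeleton is the right one and is exactly what the paper intends: the paper's own ``proof'' is a one-line deferral to the symmetric argument of Proposition~\ref{lem:m>1Symmetric}, so the defection-propagation half and the reduction of the contribution half to an expected-successor-membership inequality are precisely the missing details. The defection half of your argument is fine, as is the observation that the random-sizes refinement amounts to substituting the average for the maximum.

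However, there is a genuine direction error in the key bounding step. You propose to ``bound the expected future membership from below by replacing every $n_i$ with the maximum $n=\max_i n_i$, which makes each successor group look as large as possible and hence makes contributing look as costly-to-justify as possible.'' Both halves of that sentence point the wrong way. Inflating the successor groups to size $n$ gives an \emph{upper} bound on $\sum_{k>t} n_k$, not a lower bound; and a larger successor membership makes the gain from perpetuating contribution \emph{larger}, i.e.\ contributing \emph{easier} to justify, so this substitution cannot produce a worst case. Executed literally, your plan yields a condition of the form $\frac{r}{N}\bigl((b-E[t])\,n+1\bigr)\geq 1$, which is not the stated threshold and is not a valid sufficient condition. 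The correct move, which is what the formula $r\geq \frac{2N}{2N-(b+m-1)n}$ actually encodes, is to write the successor membership as $N-\sum_{k\leq t}n_k$ and bound the \emph{predecessor} (and own-group) membership from \emph{above} by $t\cdot n$ with $n=\max_i n_i$; taking the expectation of $t$ over its feasible range $\{m+1,\dots,b\}$ then gives a lower bound on the successors of the form $N-\tfrac{(b+m+\cdot)}{2}n$, whence the $2N$ (true total) minus $(b+m-1)n$ (max size times expected predecessor count) in the denominator. The same correction applies to the averaged version. Your closing concern about factorizing the expectation is moot in the worst-case bound (the sizes $n_1,\dots,n_b$ are attached to fixed positions, so no independence is needed there); it is only relevant for the randomly-drawn-sizes remark, where it is handled exactly by the uniform-assignment assumption you cite.
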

\begin{proof} We omit the details of this proof as the arguments employed in the proof of Lemma~\ref{lem:m>1Symmetric} can be applied here. 
\end{proof}
It is worth noting that the lower bounds on $r$ above do not represent a tight bound. In fact, it is merely a {\it worst case scenario value for $r$}. Indeed, consider the following example.
\begin{example} Consider the scenario with 3 groups of size $1 = n_1$ and $n_2=n_3 = 2$ and $m=2$. A simple computation yields that the minimum value of $r$ for which players would contribute is $\frac{5}{9}$ whereas the one from the above Lemma is
\[
r = \frac{10}{10 - (3+2-1)2 } = 5. 
\]
If the groups sizes $n_1, \ldots, n_b$ are chosen at random then $r \approx 3$.
\end{example}

Next we focus on the scenario where $m=1$. In the following result we tacitly assume that a member of Group $t>1$ must have knowledge of $n_{t-1}$; this is in order to be able to detect defections. Although the algebra in the asymmetric case becomes more complicated, the intuition for the existence of equilbira (i.e., pivotality) still applies.

\begin{theorem}[Asymmetric groups for $m=1$]\label{thm:mixedStratAsym} Theorem~\ref{thm:pureStrat} and~\ref{thm:mixedStrat} hold for the asymmetric scenario.
\end{theorem}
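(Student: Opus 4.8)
The plan is to treat the two embedded claims separately, adapting each symmetric profile to the heterogeneous setting and then checking that the structural arguments — not the exact constants — survive. Throughout I would use the tacit assumption that a member of Group $t$ knows $n_{t-1}$, so that ``no defection'' is detected by the condition $\zeta=(1,n_{t-1})$, and I would replace the profiles of Theorems~\ref{thm:pureStrat} and~\ref{thm:mixedStrat} by $\sigma_i^*(C\mid\zeta)=1$ exactly when $\zeta\in\{(0,0),(1,n_{t-1})\}$, with the remaining value ($0$ in the pure case, $\gamma$ in the mixed case) otherwise. Since consistency of the limiting beliefs $\mu^*$ is obtained verbatim from the symmetric case via an analogous null-sequence perturbation $\sigma_i^k$ (the trembles do not see group sizes), I would concentrate on sequential rationality.

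For the pure-strategy part I would verify the two best-response conditions. The optimality of \emph{defect after a defection} rests on the non-pivotality observation from the symmetric proof: an agent in a group of size at least $2$ who observes a defection knows that every group-mate observes the same sample, so under the profile her group's total falls short of $n_t$ whatever she does, and the defection propagates to Group $t+1$ regardless of her action; hence defecting is a best response for every $r\ge 2$. Groups of size $1$ must be flagged as the exceptional case, exactly as in \textcite{gallice2019co}, where a lone agent \emph{can} reset propagation; I would either restrict to $n_i\ge 2$ or carry the Gallice--Co bound for those positions. The optimality of \emph{contribute after full contribution} reduces, as in Lemma~\ref{lem:m>1Symmetric}, to comparing the sure gain from contributing against the expected continuation value, where the uniform average group size is now replaced by the expectation of the subsequent group sizes under the prior on $\mathcal{Q}$; the threshold again collapses to $r\ge 2$.

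For the mixed-strategy part I would rebuild the indifference function. Writing $\phi_t(\gamma)$ for the expected number of additional contributions a contributing agent induces and $\psi_t(\gamma)$ for the posterior probability of occupying Group $t$ after observing a defection — both now computed against the heterogeneous size profile $(n_1,\dots,n_b)$ and the induced distribution on $\mathcal{Q}$ — I would form
\[
H(\gamma)=\frac{r}{N}\sum_{t=2}^{b}\psi_t(\gamma)\phi_t(\gamma)-1 .
\]
The key is to show that the qualitative features of the symmetric case persist: first, $H(0)=H(1)=\frac{r}{N}-1<0$ whenever $r<N$, because at $\gamma\in\{0,1\}$ the agent is never pivotal; second, $H(\gamma)>H(0)$ for every $\gamma\in(0,1)$, so by Rolle's Theorem $H$ attains an interior local maximum; third, since $H$ is linear and increasing in $r$ wherever the sum is positive, and since at $r=N$ one has $H(0)=H(1)=0$ with $H>0$ on $(0,1)$, the maximum of $H$ is already positive at $r=N$, forcing a smallest $r^\sharp<N$ at which this interior maximum equals $0$ (tangency, giving the single root $\gamma^\sharp$). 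For every $r>r^\sharp$ the maximum is strictly positive, so $H$ has two roots $\gamma^1_r,\gamma^2_r\in(0,1)$, each determining a mixed equilibrium exactly as in Theorem~\ref{thm:mixedStrat}.

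The main obstacle is combinatorial rather than conceptual: with unequal group sizes the posterior $\psi_t(\gamma)$ no longer has the clean symmetric form, since both the chance of landing in Group $t$ and the chance that the observed defection is consistent with that position depend on the whole size vector and on the random assignment $Q$. I expect the delicate steps to be (i) confirming that the boundary identities $H(0)=H(1)=\frac{r}{N}-1$ and the strict inequality $H(\gamma)>H(0)$ survive termwise once the sizes vary, and (ii) controlling the shape of $H$ enough to guarantee a single interior maximum — or at least that the ``maximum value as a function of $r$'' argument still yields $r^\sharp<N$ and the two-root conclusion. Establishing that $H$ remains a polynomial in $\gamma$ of the same degree, with coefficients that are positive combinations of the $n_i$, is what makes the monotonicity-in-$r$ and Rolle arguments transfer unchanged.
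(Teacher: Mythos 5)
Your overall architecture matches the paper's: adapt the profile using knowledge of $n_{t-1}$, form $H(\gamma)=\frac{r}{N}\sum_{t=2}^b\psi_t(\gamma)\phi_t(\gamma)-1$, establish the boundary identities $H(0)=H(1)=\frac{r}{N}-1<0$ together with $H(\gamma)>H(0)$ on $(0,1)$, and then run the Rolle-plus-continuity-in-$r$ argument to extract $r^\sharp<N$ and the two roots. The pure-strategy part via non-pivotality is also the same in substance as the paper's observation that $H(0)<0$.

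However, the one step you explicitly defer --- ``confirming that the boundary identities survive termwise once the sizes vary'' --- is precisely the technical content of the paper's proof, and your proposed substitute (showing $H$ is a polynomial in $\gamma$ of fixed degree with coefficients that are positive combinations of the $n_i$) is neither carried out nor obviously true, since $\psi_t(\gamma)$ is a ratio, not a polynomial. The paper's device is different and more concrete: it writes $\phi_t(\gamma)=\sum_{i\ge t}\Delta_i(\overline{\zeta})$, observes the exact sum is intractable, and instead sandwiches each $\Delta_i$ between explicit functions built from $M=\max_i n_i$ and $\lambda=\min_i n_i$, yielding bounds $\phi_t^{\bot}(\gamma)\le\phi_t(\gamma)\le\phi_t^{\top}(\gamma)$ whose limits at $\gamma\in\{0,1\}$ are all equal to $1$; squeezing then gives $\lim_{\gamma\to 0}\phi_t(\gamma)=\lim_{\gamma\to 1}\phi_t(\gamma)=1$ and hence the boundary identities, after which the symmetric-case continuity argument applies verbatim. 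Without this (or some equivalent) bounding step your argument is a plan rather than a proof. On the positive side, you correctly flag the $n_i=1$ degeneracy --- the squeeze bounds do blow up at $\gamma\to 0$ when $\lambda=1$, a caveat the paper itself leaves implicit --- and your worry about uniqueness of the interior maximum is unnecessary: two roots follow from negativity at the endpoints and positivity somewhere in the interior alone.
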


The proof of Theorem \ref{thm:mixedStratAsym} follows a similar approach to Theorems~\ref{thm:pureStrat} and~\ref{thm:mixedStrat}, and we therefore postpone the proof  to the Appendix (Section \ref{apendix:Proofs}). Moreover, it is important to note that the results derived for asymmetric groups are in the same vein as the symmetric case, and therefore similar discussions follow.

\section{Remarks}

Our main objective behind this project has been to investigate whether better contribution levels are achievable for public goods, if we allow for natural group formation followed by sequential contributions by the respective groups. We draw inspiration from the idea of crowd-funding projects. We make the following main conclusions: (i) full contribution is achievable when players observe at least two of their predecessor groups' contributions, (ii) a player will be pivotal in inducing other players to contribute even with a history of defection, in the situation when players observe only their immediate predecessor groups' total contribution (Theorem \ref{thm:mixedStrat}).    
The reader familiar with Theorem~\ref{thm:mixedStrat} and its proof would probably be inclined to revisit the results from Section~\ref{sec:sym>1} and wonder if a similar result is also possible when $m>1$. That is, is there a \textit{mixed strategy equilibrium} for the case where $m>1$? After all, the situation encountered by player a in Group $t>0$ for $m>1$ is similar to that for a player in any Group $t'>0$ with $m=1<n$; said player is aware that receiving a sample with a defection implies that subsequent groups will also witness a defection. We believe that the arguments developed for Theorem~\ref{thm:mixedStrat} should also apply when $m=1$ and any value of $n$.

\begin{conjecture} [Mixed Strategies with $m=1$] There exists a value $r^\sharp<N$ with a corresponding $\gamma^\sharp \in [0,1)$ where, for all $i\in I$, the profile of play
    \begin{align*}
    \sigma_i^*(C\mid \zeta) =
\begin{cases}
1, & \zeta\in\{(0,0), (1,n)\}\\
\gamma^\sharp, & \text{otherwise}
\end{cases}
\end{align*}
yields a sequential equilibrium $(\sigma^*, \mu^*)$. Moreover, for all values $r>r^\sharp$ there exist two distinct values $\gamma^1_r,\gamma^2_r \in[0,1)$ so that, for all $i\in I$, the profiles of play

\begin{align*}
    \sigma_{i,1}^*(C\mid \zeta) =
\begin{cases}
1, & \zeta \in\{(0,0), (1,n)\}\\
\gamma^1_r, & \text{otherwise},
\end{cases}
\end{align*}

\begin{align*}
    \sigma_{i,2}^*(C\mid \zeta) =
\begin{cases}
1, & \zeta \in\{(0,0), (1,n)\}\\
\gamma^2_r, & \text{otherwise}
\end{cases}
\end{align*}
 establish two distinct sequential equilibria $(\sigma^*_1, \mu^*_1)$ and $(\sigma^*_1, \mu^*_2)$, respectively.

\end{conjecture}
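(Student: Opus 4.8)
The plan is to run the same $H(\gamma)$ argument sketched above, but now assembled for the regime the conjecture targets, and to lean on the observation that $H$ factors as $H(\gamma)=\tfrac{r}{N}F(\gamma)-1$ with $F(\gamma)=\sum_{t=2}^{b}\psi_t(\gamma)\phi_t(\gamma)$ \emph{independent of $r$}. First I would set up the two ingredients in the generalized sampling structure: $\psi_t(\gamma)$, the posterior that the deviator occupies Group $t$ conditional on having sampled a defection, read off from the uniform prior over position assignments together with the likelihoods induced by the limiting strategies $\sigma_i^k$ and beliefs $\mu_i^k$; and $\phi_t(\gamma)$, the expected number of extra downstream contributions caused by contributing rather than defecting from Group $t$. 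The cascade lives entirely in $\phi_t$: passing one additional observed unit to the groups that sample Group $t$ shifts their sampled count, which under interior mixing $\gamma$ raises their contribution probability, and the effect propagates forward; I would record $\phi_t(\gamma)$ as a finite sum over successor groups of the marginal probability that the extra unit converts a defection into a contribution.

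Second, I would establish the two boundary identities $F(0)=F(1)=1$, which are what force $H(0)=H(1)=\tfrac{r}{N}-1<0$ (using $r<N$). The decisive structural input here is that \emph{no single player can unilaterally heal the chain}: a defection that the deviator samples is also present in the window sampled by the immediately following group, so the successor registers a defection whether or not the deviator contributes. This is automatic when $m>1$ for every group size, including $n=1$ (exactly the content flagged in the Remarks, and already used in the proof of Proposition~\ref{lem:m>1Symmetric}), and when $m=1$ it requires $n>1$, recovering Theorem~\ref{thm:mixedStrat}. Under this input, $\phi_t(0)=\phi_t(1)=1$ and $\sum_t\psi_t=1$, giving the claimed endpoints. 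The genuinely positive content is the interior estimate $F(\gamma)>1$ for $\gamma\in(0,1)$: I would prove $\phi_t(\gamma)>1$ by exhibiting at least one successor group whose contribution probability strictly increases with the sampled count, so that the extra unit strictly raises expected downstream contributions, and then check that this strict gain survives the weighting by $\psi_t$. Continuity of $F$ on the compact interval then yields a maximiser $\gamma^\sharp$ with $F^\star:=F(\gamma^\sharp)>1$.

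Third, I would set $r^\sharp:=N/F^\star$, so that $r^\sharp<N$ precisely because $F^\star>1$. Since $H(\gamma)=\tfrac{r}{N}F(\gamma)-1$ is affine and strictly increasing in $r$ for each fixed $\gamma$, at $r=r^\sharp$ we get $H(\gamma^\sharp)=0$ with $H<0$ off the maximiser, delivering the single equilibrium $\gamma^\sharp$; and for every $r>r^\sharp$ we have $N/r<F^\star$, so the intermediate value theorem applied separately on $[0,\gamma^\sharp]$ and on $[\gamma^\sharp,1]$ produces two distinct roots $\gamma^1_r<\gamma^\sharp<\gamma^2_r$ of $F(\gamma)=N/r$, each an indifference point at the defection node. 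Sequential rationality at the non-defection nodes $(0,0)$ and $(1,n)$, and consistency of $\mu^*$ through the sequence $(\sigma_i^k,\mu_i^k)$, are inherited verbatim from the proofs of Theorems~\ref{thm:pureStrat} and~\ref{thm:mixedStrat}.

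The hard part will be two-fold. Quantitatively, the interior bump $F(\gamma)>1$ is more demanding than in the symmetric $m=1<n$ case, because once the sampling windows have width $m>1$ (or the groups differ in size) the posterior $\psi_t$ and the cascade count $\phi_t$ couple across overlapping windows, so the telescoping that made $\phi_t(\gamma)>1$ transparent must be re-derived and shown not to be annihilated in the weighted sum. Conceptually, the subtler point is the exact root count and the boundary behaviour reflected in the conjecture's range $[0,1)$: upgrading ``at least two roots'' (all Rolle's theorem guarantees) to ``exactly two distinct values'' requires strict unimodality of $F$, and reconciling the degenerate corner $m=1=n$ --- where \textcite{gallice2019co} find a \emph{single} mixed equilibrium and the identity $F(0)=1$ fails because one player then \emph{can} heal the chain --- would force one of the two critical points onto the boundary $\gamma=0$, which is presumably why the admissible range is closed at $0$. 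Establishing strict unimodality of $F$, rather than merely the existence of a positive maximum, is the step I expect to resist a short argument.
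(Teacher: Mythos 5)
The statement you are proving is a \emph{conjecture}: the paper explicitly postpones its proof to future work, so there is no in-paper argument to compare against. Judged on its own terms, your proposal correctly identifies the skeleton that the authors use for Theorem~\ref{thm:mixedStrat} and clearly intend for the conjecture --- write $H(\gamma)=\tfrac{r}{N}F(\gamma)-1$ with $F(\gamma)=\sum_{t}\psi_t(\gamma)\phi_t(\gamma)$ independent of $r$, show $F(0)=F(1)=1$ via the ``no single player can heal the chain'' observation, show $F>1$ in the interior, set $r^\sharp=N/\max F$, and extract two roots by the intermediate value theorem on either side of the maximiser. Your reading of the boundary case $m=1=n$ (where \textcite{gallice2019co} obtain a single mixed equilibrium and the range must be closed at $0$) is also the right way to reconcile the conjecture's $[0,1)$ with Theorem~\ref{thm:mixedStrat}'s $(0,1)$. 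One genuine simplification you are entitled to: the conjecture only asserts the \emph{existence} of two distinct values $\gamma^1_r,\gamma^2_r$, so your IVT argument on $[0,\gamma^\sharp]$ and $[\gamma^\sharp,1]$ already suffices and the strict unimodality of $F$ that you flag as the hard step is not actually required.

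The gap is that the mathematically substantive content is asserted rather than proved, and it is exactly the content that led the authors to leave this as a conjecture. For $m>1$ the analogues of Lemma~\ref{lem:psiPhi} do not come for free: a sampled defection persists in the windows of the next $m$ successor groups regardless of the deviator's action, so the recursions for $E_{\mu^C}(G_{t+k})-E_{\mu^D}(G_{t+k})$ no longer telescope into a clean geometric sum in $(1-\gamma^n)$, and the posterior $\psi_t$ must be re-derived for the overlapping-window likelihoods. Your step two says you ``would prove $\phi_t(\gamma)>1$ by exhibiting at least one successor group whose contribution probability strictly increases with the sampled count,'' but this is precisely the claim whose verification is open: with width-$m$ windows the deviator's extra unit changes the sampled count of $m$ successive groups simultaneously, each of which feeds back into the samples of later groups, and you give no computation showing the net weighted effect $\sum_t\psi_t(\gamma)\phi_t(\gamma)$ strictly exceeds $1$ on $(0,1)$. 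Until that interior estimate is carried out (even for one nontrivial case, say $m=2$, $n=1$), the proposal is a credible programme --- essentially the one the authors themselves sketch in the Remarks --- but not a proof.
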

A positive answer to the above would also extend the results from \textcite{gallice2019co} by providing a \textit{mixed strategy equilibrium} to their results for $m>1$, and we postpone this to future work.

\newpage
\printbibliography

\newpage
\section{Appendix: Proofs}\label{apendix:Proofs}
Let's assume that we have $b = \frac{N}{n}$ many groups with $n$ individuals per group. Assume that the standard strategy given profile $\zeta$ is
\begin{align*}
    \sigma^*(C\mid \zeta) =
\begin{cases}
1, & \zeta \in \{(0,0), (1,n)\}\\
\gamma, & \text{otherwise}.
\end{cases}
\end{align*}
where $\gamma \in[0,1)$. Fix a profile $\overline{\zeta}$ and set for Player $j$ the strategies
\[
\sigma_j^C(\zeta)
=
\begin{cases}
\sigma^*_j(\zeta), & \zeta \not = \overline{\zeta}\\
1, & \zeta = \overline{\zeta}.
\end{cases}
\]

\[
\sigma_j^D(\zeta)
=
\begin{cases}
\sigma^*_j(\zeta), & \zeta \not = \overline{\zeta}\\
0, & \zeta = \overline{\zeta},
\end{cases}
\]
and $\mu_j^D$ and $\mu_j^C$ as their corresponding beliefs. Set $\phi_t(\gamma)$ and $\psi_t(\gamma)$ as the number of additional contributions said player expects from contributing rather than defecting whilst in Group $t$, and the likelihood  of being in Group $t$ after observing a defection, respectively. For $n>1$, $\phi_t(\gamma)$ is different depending on the history $\overline{\zeta} = (1, n')$ observed by Player $j$. Indeed, if $\overline{\zeta} = (1, n')$ with $n>n'$ then all other member of Group $t$' will also observe a defection and act according to their strategies. Hence, there is a non-zero probability that subsequent groups will also witness a sample with defection even if Player $j$ itself contributes. In contrast, if $\overline{\zeta} = (1, n)$ then Player $j$'s defection will be the only one in her group and the effect of her defection should be larger than the previous scenario. The following lemma demonstrates just that.

\begin{lemma}\label{lem:psiPhi} Both $\phi_t(\gamma), \psi_t(\gamma): [0,1] \to \R$ are continuous functions where:
\begin{enumerate}
\item given $\overline{\zeta} = (1, n')$ with $n>n'$
\[ \phi_t(\gamma) = \frac{n(1-\gamma)(1-(1-\gamma^n)^{b-t})}{\gamma}+1.
\]
with $\phi_t(0) = 1$ for all $n>1$ and $\phi_t(0) = b-t+1$ for $n=1$; 
\item given $\overline{\zeta} = (1, n)$

\[ \phi_t(\gamma) = \frac{n(1-\gamma)(1-(1-\gamma^n)^{b-t})}{\gamma^n}+1.
\]
with $\phi_t(0) = (b-t)n +1$; and
\item  \[
\psi_t(\gamma) = \frac{(1-(1-\gamma^n)^{t-1})}{b-1-\gamma^{-n}(1-\gamma^n)(1-(1-\gamma^n)^{b-1})}
\]
with $\psi_t(0)  =\frac{2(t-1)}{b(b-1)}$.
\end{enumerate}
\end{lemma}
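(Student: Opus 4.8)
The plan is to treat the two functions separately: $\phi_t$ reduces to a forward ``cascade'' computation, while $\psi_t$ is a Bayesian belief that must be pinned down by consistency (trembles), since under the candidate profile no defection ever occurs on path. For $\phi_t$, I would first isolate the only channel through which Player $j$'s own action can affect later groups: whether her group $t$ reaches full contribution $n$ or not. To capture the continuation value I introduce the defection-state value $E_k$, the expected number of contributions produced by the next $k$ groups once a defection has been observed, and set up its one-step recursion: a group that observes a defection plays $\gamma$, contributes $n\gamma$ in expectation, heals the chain (triggering a deterministic full cascade worth $n(k-1)$ over the remaining groups) with probability $\gamma^n$, and otherwise passes the defection on, giving $E_{k-1}$. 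Writing $D_k = nk - E_k$ linearizes this into $D_k = n(1-\gamma) + (1-\gamma^n)D_{k-1}$ with $D_0 = 0$, which I solve as a geometric sum to obtain $D_k = n(1-\gamma)\frac{1-(1-\gamma^n)^k}{\gamma^n}$.

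The two cases of $\phi_t$ then differ only by the probability that $j$ is pivotal, i.e.\ that her switch from $D$ to $C$ is exactly what pushes group $t$ from $n-1$ to $n$. When the sample is $(1,n')$ with $n'<n$, every fellow group member also observes a defection and contributes only with probability $\gamma$, so the pivotal event ``the other $n-1$ members all contribute'' has probability $\gamma^{n-1}$; when the sample is $(1,n)$ the fellow members contribute for sure and the pivotal probability is $1$. In each case $\phi_t = 1 + (\text{pivotal prob})\cdot D_{b-t}$, the leading $1$ being $j$'s own contribution (the difference in payoffs is $\tfrac{r}{N}\big((G_{-i}^C - G_{-i}^D)+1\big)-1$), which yields the two displayed formulas after substituting $D_{b-t}$. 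The boundary values $\phi_t(0)$ follow from $1-(1-\gamma^n)^{b-t} \sim (b-t)\gamma^n$ as $\gamma\to 0$, where the surviving $\gamma^{n-1}$ factor in case~1 is what separates $n=1$ from $n>1$.

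For $\psi_t$ I would perturb the profile with a vanishing tremble $\epsilon$ (each ``contribute-for-sure'' decision instead contributes with probability $1-\epsilon$) and compute, to leading order in $\epsilon$, the probability that group $t$ observes a defection. A single tremble originating in some group $s\le t-1$ occurs with probability $\approx n\epsilon$ and must then survive unhealed through groups $s+1,\dots,t-1$, each surviving with probability $1-\gamma^n$; summing the geometric series over $s$ gives $\frac{n\epsilon}{\gamma^n}[1-(1-\gamma^n)^{t-1}]$. Combining with the uniform prior $1/b$ on positions (uniform by symmetry of the assignment) and normalizing over $t$, the common factor $n\epsilon/(b\gamma^n)$ cancels, so $\epsilon$ disappears in the limit and $\psi_t(\gamma) = \frac{1-(1-\gamma^n)^{t-1}}{\sum_{t'=1}^b[1-(1-\gamma^n)^{t'-1}]}$. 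A short simplification of the denominator — summing the geometric series and using $\gamma^{-n}(1-\gamma^n)=\gamma^{-n}-1$ — recasts it as $b-1-\gamma^{-n}(1-\gamma^n)(1-(1-\gamma^n)^{b-1})$, and the limit $\gamma\to 0$ (via $1-(1-\gamma^n)^k\sim k\gamma^n$) gives $\psi_t(0)=\frac{2(t-1)}{b(b-1)}$.

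Continuity on $[0,1]$ is then immediate: each function is a ratio of expressions analytic on $(0,1]$ whose denominator stays bounded away from zero there (the $\psi_t$ denominator equals $\sum_{t'}[1-(1-\gamma^n)^{t'-1}]>0$), and the removable singularities at $\gamma=0$ are precisely the boundary values just computed. I expect the $\psi_t$ step to be the main obstacle — specifically, justifying that these off-path beliefs are the trembling-hand limit and that the single-tremble (leading-order) contribution is the only one surviving normalization; by contrast the $\phi_t$ recursion and the continuity check are comparatively routine.
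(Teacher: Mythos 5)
Your proposal is correct and follows essentially the same route as the paper: $\phi_t$ is computed as the difference $E_{\mu^C}-E_{\mu^D}$ of downstream expected contributions, with the healing probability $\gamma^n$ generating the geometric factor $(1-\gamma^n)^{k-1}$ and the pivotal factor $\gamma^{n-1}$ (resp.\ $1$) distinguishing the two sample types, while $\psi_t$ is the normalized survival probability of an off-path defection. Your recursion for the deficit $D_k$ and your explicit tremble justification of the off-path belief are if anything more carefully argued than the paper's terse term-by-term computation, but the substance is identical.
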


\begin{proof}

\noindent
The number of additional contributors a player in Group $t$ should expect from contributing rather than defecting given history $\overline{\zeta} = (1,n')$ becomes 
\[
\phi_t(\gamma) = E_{\mu^C}(G_{-j}\mid \zeta = \overline{\zeta}, Q(j) = t) - E_{\mu^D}(G_{-j}\mid \zeta = \overline{\zeta}, Q(j) = t).
\]
where
\begin{align*}
E_{\mu^C}(G_{-j}\mid \zeta = \overline{\zeta}, Q(j) = t)  &= \sum_{i=1}^{t-1}E_{\mu^*}(G_{i}\mid \zeta = \overline{\zeta}, Q(j) = t)\\
& +  \sum_{i=t}^{s} E_{\mu^C}(G_{i}\mid \zeta = \overline{\zeta}, Q(j) = t),
\end{align*}
\begin{align*}
E_{\mu^D}(G_{-j}\mid \zeta = \overline{\zeta}, Q(j) = t)  &= \sum_{i=1}^{t-1}E_{\mu^*}(G_{i}\mid \zeta = \overline{\zeta}, Q(j) = t)\\
& +  \sum_{i=t}^{s} E_{\mu^D}(G_{i}\mid \zeta = \overline{\zeta}, Q(j) = t),
\end{align*}
and $G_i$ represents the $i^\text{th}$ group.
Hence,
\[
\phi_t(\gamma) = \sum_{i=t}^{b} E_{\mu^C}(G_{i}\mid \zeta = \overline{\zeta}, Q(j) = t) -   E_{\mu^D}(G_{i}\mid \zeta = \overline{\zeta}, Q(j) = t).
\]
\noindent
1. If the sample $\overline{\zeta} = (1,n')$ contains a defection (i.e., $n>n'$) we have
\[
E_{\mu^D}(G_{t}\mid \zeta = \overline{\zeta}, Q(j) = t) =   (n-1)\gamma 
\]
and
\[
E_{\mu^C}(G_{i}\mid \zeta = \overline{\zeta}, Q(j) = t) = \sum_{i=t}^{b} E_{\mu^D}(G_{i}\mid \zeta = \overline{\zeta}, Q(j) = t) + 1.
\]
\noindent
For $t+1$ we get 
\[
E_{\mu^D}(G_{t+1}\mid \zeta = \overline{\zeta}, Q(j) = t)= \gamma n 
\]
\text{ and } 
\[
E_{\mu^C}(G_{t+1}\mid \zeta = \overline{\zeta}, Q(j) = t) = \gamma^{n-1}n + (1-\gamma^{n-1})\gamma n.
\]
In general, for $t+k$ with $k\geq 1$ we have
\[
E_{\mu^D}(G_{t+k}\mid \zeta = \overline{\zeta}, Q(j) = t) = n\left[1-(1-\gamma^n)^{k-1}(1-\gamma)\right] 
\]
and 

\[
E_{\mu^C}(G_{t+k}\mid \zeta = \overline{\zeta}, Q(j) = t) = n\left[\gamma^{n-1}+ (1-\gamma^{n-1}) (1-(1-\gamma)(1-\gamma^n)^{k-1}\right] 
\]
In turn, 
\[
E_{\mu^C}(G_{t+k}\mid \zeta = \overline{\zeta}, Q(j) = t) - E_{\mu^D}(G_{t+k}\mid \zeta = \overline{\zeta}, Q(j) = t) = n\gamma^{n-1}(1-\gamma)(1-\gamma^n)^{k-1}.
\]
As a sum of powers of $(1-\gamma^n)^{k-1}$ we deduce tha for any $\gamma \in (0,1]$:
\[
\phi_t(\gamma) = \frac{n(1-\gamma)(1-(1-\gamma^n)^{b-t})}{\gamma}+1.
\]
with
\[
\lim_{\gamma\to 0} \phi_t(\gamma) = \lim_{\gamma\to 1} \phi_t(\gamma) = 1.
\]
for all $n>1$. It is interesting to observe that $\phi_t(\gamma)>1$ for all $
\gamma \in (0,1)$. In fact, the distribution $\phi_t(\gamma)$ is bell shaped. This means that there is an optimal value of $\gamma$ that will maximise the additional contributions aplayer can expect by contributing rather than defecting. If we let $n=1$ we get 
\[
\phi_t(\gamma) =\frac{1-(1-\gamma)^{b-t+1}}{\gamma}
\]
with 
\[
\lim_{\gamma\to 0} \phi_t(\gamma) = b-t+1.
\]
This is precisely what is obtained in \textcite{gallice2019co}.\\

2. If sample $\overline{\zeta} = (1,n)$ then the computation is similar but much simpler. For $t+k$ with $k\geq 1$ we have
\[
E_{\mu^D}(G_{t+k}\mid \zeta = \overline{\zeta}, Q(j) = t) = n\left[1-(1-\gamma^n)^{k-1}(1-\gamma)\right] 
\]
and 

\[
E_{\mu^C}(G_{t+k}\mid \zeta = \overline{\zeta}, Q(j) = t) = n 
\]
In turn, 
\[
E_{\mu^C}(G_{t+k}\mid \zeta = \overline{\zeta}, Q(j) = t) - E_{\mu^D}(G_{t+k}\mid \zeta = \overline{\zeta}, Q(j) = t) = n(1-\gamma)(1-\gamma^n)^{k-1}.
\]
Therefore, for any $\gamma \in (0,1]$ we have
\[
\phi_t(\gamma) = \frac{n(1-\gamma)(1-(1-\gamma^n)^{b-t})}{\gamma^n}+1.
\]
with
\[
\lim_{\gamma\to 0} \phi_t(\gamma) = (b-t)n+1.
\]

\noindent

3. Let $\psi_t(\gamma)$ denote the likelihood that a player finds itself in position $t$ after witnessing a defection. It follows that
\begin{align*}
\psi_t(\gamma) = \frac{\sum_{j=1}^{t-1}(1-\gamma^n)^j}{\sum_{k=2}^b\sum_{i=1}^{k-1}(1-\gamma^n)^{t-i-1}} &= \frac{ \gamma^{-n}(1-(1-\gamma^n)^{t-1})}{\gamma^{-n}\left(b-1+\gamma^{-n}(1-\gamma^n)(1-(1-\gamma^n)^{b-1}\right)}\\
&= \frac{(1-(1-\gamma^n)^{t-1})}{b-1-\gamma^{-n}(1-\gamma^n)(1-(1-\gamma^n)^{b-1})}
\end{align*}

For all other $n>1$, we can make the replacement $y = \gamma^p$ to obtain after, applying L'Hospital's Rule, $\psi_t(0) =\frac{2(t-1)}{b(b-1)}$.\\
\end{proof}

\noindent
{\bf Proofs of Theorems~\ref{thm:pureStrat} and~\ref{thm:mixedStrat}.}\\ \label{Proof:Thms1and2}

In what follows it becomes useful to let $\phi_t(\gamma, \zeta)$ denote $\phi_t(\gamma)$ given a profile $\zeta$. A player in Group $1$ contributes whenever $\frac{r}{N}\phi_1(\gamma, (0,0)) - 1 \geq 0$; a player who received a sample $\overline{\zeta} = (1,n)$ contributes provided $\sum_{t=2}^b\frac{1}{b-1}\phi_t(\gamma, \overline{\zeta})-1\geq 0$; and given profile $\overline{\zeta}' = (1,n')$ with $n'<n$ a player contributes provided $\sum_{t=2}^b \psi_t(\gamma)\phi_t(\gamma, \overline{\zeta}')-1\geq 0$.

\begin{lemma}\label{lem:inequalities} Given profiles $\overline{\zeta}' = (1,n')$ with $n'<n$ and $\overline{\zeta} = (1,n)$ it follows that for all $\gamma \in [0,1]$

\[
\phi_1(\gamma) > \sum_{t=2}^b\frac{1}{b-1}\phi_t(\gamma, \overline{\zeta}) \geq \sum_{t=2}^b \psi_t(\gamma)\phi_t(\gamma,\overline{\zeta}').
\]
\end{lemma}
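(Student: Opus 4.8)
The plan is to establish the chain of inequalities by comparing the three expressions term-by-term and then handling the weighting schemes that appear on the second inequality. First I would record the explicit formulas from Lemma~\ref{lem:psiPhi}: for a profile $\overline{\zeta}=(1,n)$ we have $\phi_t(\gamma,\overline{\zeta}) = \frac{n(1-\gamma)(1-(1-\gamma^n)^{b-t})}{\gamma^n}+1$, whereas for $\overline{\zeta}'=(1,n')$ with $n'<n$ we have $\phi_t(\gamma,\overline{\zeta}') = \frac{n(1-\gamma)(1-(1-\gamma^n)^{b-t})}{\gamma}+1$. Since $\gamma\in[0,1]$ gives $\gamma^n \le \gamma$, the denominator in the $(1,n)$ case is smaller, so $\phi_t(\gamma,\overline{\zeta}) \ge \phi_t(\gamma,\overline{\zeta}')$ holds pointwise in $t$. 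This settles the \emph{inner} comparison needed for the right-hand inequality; at the endpoints $\gamma\in\{0,1\}$ one uses the limiting values recorded in Lemma~\ref{lem:psiPhi} to confirm continuity and the weak inequality.

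Next I would address the two distinct weighting schemes: the middle expression averages $\phi_t(\gamma,\overline{\zeta})$ uniformly with weights $\frac{1}{b-1}$ over $t=2,\dots,b$, while the right expression weights $\phi_t(\gamma,\overline{\zeta}')$ by $\psi_t(\gamma)$. Both weight vectors sum to $1$ (the uniform one by construction, and $\sum_{t=2}^b \psi_t(\gamma)=1$ since $\psi_t$ is a probability distribution over positions). The key structural fact is that $\psi_t(\gamma)$ is \emph{increasing} in $t$ — later positions are more likely after observing a defection — because its numerator $1-(1-\gamma^n)^{t-1}$ is increasing in $t$ while the denominator is constant in $t$. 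Meanwhile $\phi_t(\gamma,\overline{\zeta}')$ is \emph{decreasing} in $t$, since later groups leave fewer successors to influence (the factor $1-(1-\gamma^n)^{b-t}$ shrinks as $t$ grows). Thus $\psi_t$ places more weight precisely on the smaller values of $\phi_t$, so by a rearrangement/Chebyshev-sum argument the $\psi$-weighted average cannot exceed the uniform average of the same sequence. Combining this with the pointwise domination $\phi_t(\gamma,\overline{\zeta})\ge\phi_t(\gamma,\overline{\zeta}')$ from the previous paragraph yields the right-hand inequality.

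For the left-hand inequality I would compare $\phi_1(\gamma)$ — the expected additional contributions for the first-group player who receives the empty sample $(0,0)$ — against the uniform average of $\phi_t(\gamma,\overline{\zeta})$. Intuitively a Group~$1$ player has the largest number of successors to influence and faces no prior defection, so $\phi_1$ should dominate every $\phi_t$ with $t\ge 2$, and hence their average strictly. I would make this precise by showing $\phi_1(\gamma) > \phi_t(\gamma,\overline{\zeta})$ for each $t\ge 2$, again reading off the closed forms and using monotonicity of $1-(1-\gamma^n)^{b-t}$ in $t$, with strictness coming from the fact that the $t=1$ term has the maximal exponent $b-1$ and no discounting from a prior defection.

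The main obstacle I anticipate is the rearrangement step for the $\psi$-weighted versus uniform comparison: one must verify rigorously that $\psi_t(\gamma)$ is monotone increasing in $t$ and that $\phi_t(\gamma,\overline{\zeta}')$ is monotone decreasing in $t$ \emph{simultaneously for every fixed} $\gamma\in(0,1)$, and then invoke the correct form of Chebyshev's sum inequality for oppositely-ordered sequences with a common probability weighting. Care is also needed at the boundary values $\gamma=0$ and $\gamma=1$, where several expressions are defined only by the limits in Lemma~\ref{lem:psiPhi}; there I would argue by continuity rather than by the generic-$\gamma$ formula, checking that the weak inequalities persist in the limit.
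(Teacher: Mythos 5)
Your proposal is correct and follows essentially the same route as the paper's proof: pointwise domination $\phi_t(\gamma,\overline{\zeta})\geq\phi_t(\gamma,\overline{\zeta}')$ read off from the closed forms in Lemma~\ref{lem:psiPhi}, followed by a comparison of the uniform and $\psi$-weightings using that $\psi_t$ increases and $\phi_t$ decreases in $t$. Your Chebyshev/rearrangement step is the same argument the paper makes via the single-crossing of $\psi_t(\gamma)$ with $\frac{1}{b-1}$ and the resulting partial-sum (stochastic-dominance) inequalities; you simply spell out the first inequality, which the paper dismisses as obvious.
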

\begin{proof} The first inequality is obvious. For the second one observe that by Lemma~\ref{lem:psiPhi}, $\phi_t(\gamma,\overline{\zeta}') \leq \phi_t(\gamma,\overline{\zeta})$ for all $t$. In turn, showing that 
\[
\sum_{t=2}^b\frac{1}{b-1}\phi_t(\gamma, \overline{\zeta}') \geq \sum_{t=2}^b \psi_t(\gamma)\phi_t(\gamma,\overline{\zeta}')
\]
suffices. Fix a $\gamma \in [0,1]$ and observe that since
\[
1= \sum_{t=2}^b\frac{1}{b-1} = \sum_{t=2}^b\psi_t(\gamma)
\]
and $\psi_2(\gamma) < \frac{1}{b-1}$ then there must exist $t^*\leq b$ with $\psi_{t}(\gamma) > \frac{1}{b-1}$ for all $t>t^*$. Consequently, 
\[
 \sum_{t=2}^{t^*}\frac{1}{b-1} \geq \sum_{t=2}^{t^*}\psi_t(\gamma) \text{ and }  \sum_{t=t^*+1}^{b}\frac{1}{b-1} \leq \sum_{t=t^*+1}^{b}\psi_t(\gamma).
\]
Since $\phi_t(\gamma)$ is decreasing in $t$ the claim follows.
\end{proof}

Next, setting \[
H(\gamma) = \frac{r}{N}\sum_{t=2}^b\psi_t(\gamma)\phi_t(\gamma) - 1
\]
we find that 
\[
H(0) =\frac{r}{N} \sum_{t=2}^n\psi_t(0)\phi_t(0) - 1 = \frac{r}{N} - 1 <0
\]
for all $n>1$.  Thus, a pure contribution strategy exists for all reasonable values of $r$ (i.e., $r<N$) and Theorem~\ref{thm:pureStrat} is proved. \\

In terms of Theorem~\ref{thm:mixedStrat} there are plenty of values of $r$ that yield a $\gamma$ with $H(\gamma) = 0$. Since $n>1$ observe that $H(\gamma) > H(1) = H(0) = \frac{r}{N} -1$ for all $\gamma \in (0,1)$. In turn, by Rolle's Theorem there exists at least one local maxima for $H(\gamma)$ in $(0,1)$. Of course, said maxima must have the same $\gamma$ value for all $r$ as the latter is a constant factor on $H(\gamma)$. Set $\gamma^\sharp$ to denote this maxima. Consider $H(\gamma)$ as a function on $r$ and $\gamma$, $H(r,\gamma)$. Since for all $\gamma \in (0,1) $ we have $H(N,\gamma)>0$ and $H(0,\gamma)<0$, and $H(r,\gamma^\sharp)$ is continuous on $r$ it follows that there exists a unique value $r^\sharp$ with $H(r^\sharp, \gamma^\sharp) = 0$. Moreover, for all $r>r^\sharp$ we get $H(r,\gamma^\sharp) > 0$ and, thus, Theorem~\ref{thm:mixedStrat} is proved. \hfill \qedsymbol{}\\

\noindent
{\bf Proof of Theorem~\ref{thm:mixedStratAsym}.}\\

 As before, let's assume that the standard strategy given profile $\zeta$ is
\begin{align*}
    \sigma^*(C\mid \zeta) =
\begin{cases}
1, & \zeta \in \{(0,0), (1,n_k)\}\\
\gamma, & \text{otherwise},
\end{cases}
\end{align*}
where $1\leq k < b$. Observe a player in Group $t>1$ must be aware of the size of Group $t-1$ in order to be able to detect a defection. Fix a profile $\overline{\zeta}$ and set for Player $j$ the strategies
\[
\sigma_j^C(\zeta)
=
\begin{cases}
\sigma^*_j(\zeta), & \zeta \not = \overline{\zeta}\\
1, & \zeta = \overline{\zeta}.
\end{cases}
\]

\[
\sigma_j^D(\zeta)
=
\begin{cases}
\sigma^*_j(\zeta), & \zeta \not = \overline{\zeta}\\
0, & \zeta = \overline{\zeta},
\end{cases}
\]
and $\mu^D$ and $\mu^C$ as their corresponding beliefs. Let $\phi_t(\gamma)$ denote the expected additional contribution from contributing rather than defecting provided Player $j$ belongs to Group $t$. As with the symmetric case
\[
\phi_t(\gamma) = \sum_{i=t}^{b} E_{\mu^C}(G_{i}\mid \zeta = \overline{\zeta}, Q(j) = t) -   E_{\mu^D}(G_{i}\mid \zeta = \overline{\zeta}, Q(j) = t).
\]
As with Lemma~\ref{lem:psiPhi}, the form of $\phi_t(\gamma)$ depends on the type of sample Player $j$ receives. Set $\Delta_i(\overline{\zeta})= E_{\mu^C}(G_{i}\mid \zeta = \overline{\zeta}, Q(j) = t) -   E_{\mu^D}(G_{i}\mid \zeta = \overline{\zeta}, Q(j)= t)$. If the sample $\overline{\zeta} = (1,n')$ contains a defection (i.e., $n_{t-1}>n'$) we have that
\[
 \Delta_i(\overline{\zeta}) = n_i(1-\gamma)\gamma^{n_t-1}\left(1-\sum_{l=1}^{k}\gamma^{n_l}\prod_{w=1}^{i-1}(1-\gamma^{n_w})\right).
\]
Similarly, if $\overline{\zeta}= (1,n_{t-1})$ then 
\[
 \Delta_i(\overline{\zeta}) = n_i(1-\gamma)\left(1-\sum_{l=1}^{k}\gamma^{n_l}\prod_{w=1}^{i-1}(1-\gamma^{n_w})\right).
\]

Observe that any time $n_t = 1$ both of the above expressions coincide, as it is to be expected. Evidently, $\Delta_i(\overline{\zeta})$ is larger when $\overline{\zeta}$ contains no defections. Computing $\phi_t(\gamma) = \sum_{i=t}^{b} \Delta_i(\overline{\zeta})$ presents an onerous task regardless of the sample $\overline{\zeta}$. What we do instead is bound each $\Delta_i(\overline{\zeta})$ between two computationally simpler functions. As the following lemma suggests, it suffices to focus on doing so for $\phi(\gamma)$ when $\overline{\zeta}$ contains a defection. As before, let $\psi_t(\gamma)$ represent the likelihood  of being in Group $t$ after observing a defection and $\phi_t(\gamma, \overline{\zeta})$ denote $\phi_t(\gamma)$ on sample $\overline{\zeta}$.

\begin{lemma} Given profiles $\overline{\zeta}' = (1,n')$ with $n'<n_{t-1}$ and $\overline{\zeta} = (1,n_{t-1})$ it follows that for all $\gamma \in [0,1]$

\[
\phi_1(\gamma) > \sum_{t=2}^b\frac{1}{b-1}\phi_t(\gamma, \overline{\zeta}) \geq \sum_{t=2}^b \psi_t(\gamma)\phi_t(\gamma,\overline{\zeta}').
\]
\end{lemma}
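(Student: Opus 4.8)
The plan is to follow the template of Lemma~\ref{lem:inequalities} from the symmetric case, splitting the chain of inequalities into the two separate claims and reducing each to monotonicity properties of $\phi_t$ and $\psi_t$. For the first inequality I would argue that $\phi_1(\gamma)$, computed on the empty sample $(0,0)$, dominates every $\phi_t(\gamma,\overline{\zeta})$ with $t\geq 2$: writing each as $\sum_{i=t}^b \Delta_i$, the group-$1$ contributor's influence is summed over the larger index range $i=1,\ldots,b$ and each summand carries no defection-discount factor, so $\phi_1(\gamma)$ exceeds the sequence $\{\phi_t(\gamma,\overline{\zeta})\}_{t\geq 2}$ pointwise and hence exceeds their uniform average.

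For the second inequality I would proceed in two steps, exactly as in the symmetric argument. First, the pointwise domination $\phi_t(\gamma,\overline{\zeta}')\leq \phi_t(\gamma,\overline{\zeta})$ is now immediate from the two displayed formulas for $\Delta_i$: they are identical except that the defection sample carries the extra factor $\gamma^{n_t-1}\in[0,1]$, so the inequality holds term by term in $i$ and therefore after summing to form $\phi_t$. It then suffices to show
\[
\sum_{t=2}^b \frac{1}{b-1}\phi_t(\gamma,\overline{\zeta}') \;\geq\; \sum_{t=2}^b \psi_t(\gamma)\phi_t(\gamma,\overline{\zeta}').
\]
Both $\{1/(b-1)\}$ and $\{\psi_t(\gamma)\}$ are probability weights on $\{2,\ldots,b\}$, so this is a Chebyshev-type comparison: since $\psi_t(\gamma)$ is increasing in $t$ while $\phi_t(\gamma,\overline{\zeta}')$ is decreasing in $t$, the $\psi$-weighted mean of $\phi_t$ cannot exceed its uniform mean. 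I would make this rigorous with the single-crossing argument already used in Lemma~\ref{lem:inequalities}: from $\sum_t \psi_t=1=\sum_t \tfrac{1}{b-1}$ and $\psi_2<\tfrac{1}{b-1}$, the difference $\psi_t-\tfrac{1}{b-1}$ changes sign exactly once at some $t^\ast$, and summation by parts against the decreasing sequence $\phi_t$ delivers the inequality.

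The main obstacle is establishing the two monotonicity inputs in the heterogeneous-size setting, where the closed forms now involve the non-uniform products $\prod_{w=1}^{i-1}(1-\gamma^{n_w})$ rather than powers of a single $(1-\gamma^n)$. Concretely I must verify (i) $\Delta_i(\gamma)\geq 0$ for all $\gamma\in[0,1]$, which makes $\phi_t=\sum_{i=t}^b\Delta_i$ decreasing in $t$ and reduces to $1-\sum_{l=1}^k\gamma^{n_l}\prod_{w=1}^{i-1}(1-\gamma^{n_w})\geq 0$; and (ii) that $\psi_t(\gamma)$ is increasing in $t$ with $\psi_2<\tfrac{1}{b-1}$. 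Rather than compute $\phi_t=\sum_{i=t}^b\Delta_i$ in closed form, which is onerous, I would sandwich each $\Delta_i$ between the two symmetric expressions obtained by replacing every $n_w$ by $n_{\min}=\min_w n_w$ and by $n_{\max}=\max_w n_w$; this reduces (i) and (ii) to the monotonicity already established in Lemma~\ref{lem:psiPhi}. The boundary cases $\gamma=0$ and $\gamma=1$ I would treat separately, using the limiting values recorded in Lemma~\ref{lem:psiPhi}, and then extend to the open interval by continuity of $\phi_t$ and $\psi_t$.
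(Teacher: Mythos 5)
Your proposal is correct and follows essentially the same route as the paper, which disposes of this lemma in one line by deferring to the symmetric-case argument of Lemma~\ref{lem:inequalities}: pointwise domination $\phi_t(\gamma,\overline{\zeta}')\leq\phi_t(\gamma,\overline{\zeta})$ via the extra factor $\gamma^{n_t-1}$, followed by the single-crossing comparison of the weights $\psi_t$ against the uniform weights using monotonicity of $\phi_t$ in $t$. If anything, you supply more detail than the paper does, in particular by flagging that the monotonicity inputs ($\Delta_i\geq 0$ and $\psi_t$ increasing in $t$) must be re-verified for heterogeneous group sizes and by proposing the $n_{\min}/n_{\max}$ sandwich the paper itself only deploys later in the proof of Theorem~\ref{thm:mixedStratAsym}.
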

\begin{proof} The proof of this Lemma is almost identical to that of Lemma~\ref{lem:inequalities}.
\end{proof}
Let $M  =\max\{n_1,\ldots, n_b\}$ and $\lambda = \min\{n_1,\ldots, n_b\}$. One can easily verify that
\begin{align*}
     \lambda(1-\gamma)\gamma^{M-1}\left(1-\sum_{i=1}^{k}\gamma^{\lambda}(1-\gamma^{M})^i\right)&\leq \Delta_i \\
 &\leq M(1-\gamma)\gamma^{\lambda-1}\left(1-\sum_{i=1}^{k}\gamma^{M}(1-\gamma^{\lambda})^i\right)
\end{align*}
In fact, $M = \lambda$ describes the symmetric scenario. Setting
\[
\phi_t^{\top}(\gamma) = \sum_{i=t+1}^b \lambda(1-\gamma)\gamma^{M-1}\left(1-\sum_{i=1}^{k}\gamma^{\lambda}(1-\gamma^{M})^i\right) + 1
\]
and
\[\phi_t^{\bot}(\gamma) =\sum_{i=t+1}^b \lambda(1-\gamma)\gamma^{\lambda-1}\left(1-\sum_{i=1}^{k}\gamma^{M}(1-\gamma^{\lambda})^i\right) + 1
\]
we get that 
\[
\phi_t^{\top}(\gamma) = \frac{M\left(1-\gamma\right)}{\gamma}\left(\gamma^{-1+\lambda}(b-t-1)\left(1-\gamma^{-1+M}\right)+\gamma^{-1-\lambda+M}\left(1-\left(1-\gamma^{\lambda}\right)^{-1-t+b}\right)\right)+1
\] and
\[
\phi_t^{\bot}(\gamma) = \frac{\lambda\left(1-\gamma\right)}{\gamma}\left(\gamma^{-1+M}(b-t-1)\left(1-\gamma^{-1+\lambda}\right)+\gamma^{-1-M+\lambda}\left(1-\left(1-\gamma^{M}\right)^{-1-t+b}\right)\right)+1
\]
 with 
 \[
 \phi_t^{\bot}(\gamma) \leq \phi_t(\gamma) \leq \phi_t^{\top}(\gamma).
 \]
 Moreover, that
 \[
 \lim_{\gamma \to 0} \phi_t^{\bot}(\gamma) = \lim_{\gamma \to 1} \phi_t^{\bot}(\gamma) = \lim_{\gamma \to 0} \phi_t^{\top}(\gamma) = \lim_{\gamma \to 1} \phi_t^{\top}(\gamma) = 1
 \]
 forces
 \[
 \lim_{\gamma \to 0} \phi_t(\gamma) = \lim_{\gamma \to 1} \phi_t(\gamma) = 1
 \]
 by a pinching $\phi_t(\gamma)$ between its bounds.  Setting  
 \[
 H(\gamma) = \frac{r}{N}\sum_{t=2}^b\psi_t(\gamma)\phi_t(\gamma) - 1
 \]
 we get that for all $\gamma \in (0,1)$: $H(\gamma) > H(1)  = H(0) =  \frac{r}{N} -1<1$ and for $r=N$, $H(\gamma) > 0$. In turn, we can apply the same continuity arguments employed in the proofs of Theorems~\ref{thm:pureStrat} and~\ref{thm:mixedStrat} to derive the desired result. \hfill \qedsymbol{}

\end{document}